\journalname{Annals of the Institute of Statistical Mathematics}
\title{The degrees of freedom of the Group Lasso\\for a General Design
}
\author{Samuel~Vaiter \and
        Charles~Deledalle \and
        Gabriel~Peyr\'{e} \and
        Jalal~Fadili \and
        Charles~Dossal
}
\institute{Samuel~Vaiter, Gabriel~Peyr\'{e}  \at
              CEREMADE, CNRS, Universit\'{e} Paris-Dauphine, Place du Mar\'{e}chal De Lattre De Tassigny, 75775 Paris Cedex 16, France \\
              \email{\{samuel.vaiter,gabriel.peyre\}@ceremade.dauphine.fr}           
           \and
           Charles~Deledalle, Charles~Dossal \at
              IMB, CNRS, Universit\'{e} Bordeaux 1, 351, Cours de la lib\'{e}ration, 33405 Talence Cedex, France \\
              \email{\{charles.deledalle,charles.dossal\}@math.u-bordeaux1.fr}
           \and
           Jalal~Fadili \at
              GREYC, CNRS-ENSICAEN-Universit\'{e} de Caen, 6, Bd du Mar\'{e}chal Juin, 14050 Caen Cedex, France \\
              \email{Jalal.Fadili@greyc.ensicaen.fr}
}
\date{ }
\begin{document}

\maketitle

\begin{abstract}
In this paper, we are concerned with regression problems where covariates can be grouped in nonoverlapping blocks, and where only a few of them are assumed to be active. In such a situation, the group Lasso is an attractive method for variable selection since it promotes sparsity of the groups. We study the sensitivity of any group Lasso solution to the observations and provide its precise local parameterization. When the noise is Gaussian, this allows us to derive an unbiased estimator of the degrees of freedom of the group Lasso. This result holds true for any fixed design, no matter whether it is under- or overdetermined. With these results at hand, various model selection criteria, such as the Stein Unbiased Risk Estimator (SURE), are readily available which can provide an objectively guided choice of the optimal group Lasso fit.

\keywords{Group Lasso \and Degrees of freedom \and Sparsity \and Model selection criteria}
\end{abstract}

\section{Introduction}
\label{sec:introduction}

\subsection{Group Lasso}
\label{sub:intro-group}

Consider the linear regression problem
\begin{equation}\label{eq:linear-problem}
  y = \XX \xx_0 + \epsilon,
\end{equation}
where $y \in \RR^n$ is the response vector, $\xx_0 \in \RR^p$ is the unknown vector of regression coefficients to be estimated, $\XX \in \RR^{n \times p}$ is the design matrix whose columns are the $p$ covariate vectors, and $\epsilon$ is the error term. In this paper, we do not make any specific assumption on the number of observations $n$ with respect to the number of predictors $p$. Recall that when $n < p$, \eqref{eq:linear-problem} is an underdetermined linear regression model, whereas when $n \geq p$ and all the columns of $\XX$ are linearly independent, it is overdetermined. \\

Regularization is now a central theme in many fields including statistics, machine learning and inverse problems. It allows to reduce the space of candidate solutions by imposing some prior structure on the object to be estimated. This regularization ranges from squared Euclidean or Hilbertian norms \citep{Tikhonov97}, to non-Hilbertian norms that have sparked considerable interest in the recent years. Of particular interest are sparsity-inducing regularizations such as the $\lun$ norm which is an intensively active area of research, e.g. \citep{tibshirani1996regre,osborne2000new,donoho2006most,Candes09,BickelLassoDantzig07}; see \citep{BuhlmannVandeGeerBook11} for a comprehensive review. When the covariates are assumed to be clustered in a few active groups/blocks, the group Lasso has been advocated since it promotes sparsity of the groups, i.e. it drives all the coefficients in one group to zero together hence leading to group selection, see \citep{bakin1999adaptive,yuan2006model,bach2008consistency,Wei10} to cite a few.\\

Let $\Bb$ be a disjoint union of the set of indices i.e.~$\bigcup_{b \in \Bb} = \ens{1,\ldots,p}$ such that $b, b' \in \Bb, b \cap b' = \emptyset$.
For $\xx \in \RR^p$, for each $b \in \Bb$, $\xx_b=(\xx_i)_{i \in b}$ is a subvector of $\xx$ whose entries are indexed by the block $b$, and $|b|$ is the cardinality of $b$. The group Lasso amounts to solving
\begin{equation}\label{eq-group-lasso}\tag{\lassoP{y}{\lambda}}
  \xsoly(y) \in
  \uargmin{ \xx \in \RR^p }  
  \frac{1}{2}\norm{y - \XX \xx}^2 
  + \lambda \sum_{b \in \Bb} \norm{\xx_b} ,
\end{equation}
where $\lambda > 0$ is the regularization parameter and $\norm{\cdot}$ is the (Euclidean) $\ldeux$-norm. By coercivity of the penalty norm, the set of minimizers of \lasso is a nonempty convex compact set. Note that the Lasso is a particular instance of \eqref{eq-group-lasso} that is recovered when each block $b$ is of size 1.

\subsection{Degrees of Freedom}
\label{sub:intro-risk}

We focus in this paper on sensitivity analysis of any solution to \eqref{eq-group-lasso} with respect to the observations $y$ and the regularization parameter $\lambda$. This turns out to be a central ingredient to compute an estimator of the degrees of freedom (DOF) of the group Lasso response. The DOF is usually used to quantify the complexity of a statistical modeling procedure \citep{efron1986biased}.

More precisely, let $\widehat{\mu}(y)=\XX \xsoly(y)$ be the response or the prediction associated to an estimator $\xsoly(y)$ of $\xx_0$, and let $\mu_0 = \XX \xx_0$. We recall that $\widehat{\mu}(y)$ is always uniquely defined (see Lemma~\ref{lem:same-image}), although $\xsoly(y)$ may not as is the case when $\XX$ is a rank-deficient or underdetermined design matrix.
Suppose that $\epsilon$ is an additive white Gaussian noise $\epsilon \sim \Nn(0,\sigma^2 \Id_n)$. Following \citep{efron1986biased}, the DOF is given by
\begin{equation*}
\DOF = 
\sum_{i=1}^n \frac{\mathrm{cov}(y_i, \widehat{\mu}_i(y)])}{\sigma^2} ~.
\end{equation*}
The well-known Stein's lemma asserts that, if $\widehat{\mu}(y)$ is a weakly differentiable function for which
\[
\EE_\epsilon\pa{\left|\frac{\partial}{\partial y_i}\widehat{\mu}_i(y)\right|} < \infty ~,
\]
then its divergence is an unbiased estimator of its DOF, i.e.
\begin{equation*}
  \widehat{\DOF} = \mathrm{div}\widehat{\mu}(y) = \tr(\partial_y \widehat{\mu}(y)) \qandq \EE_\epsilon(\widehat{\DOF}) = \DOF ~,
\end{equation*}
where $\partial_y \widehat{\mu}(y)$ is the Jacobian of $\widehat{\mu}(y)$. It is well known that in Gaussian regression problems, an unbiased estimator of the DOF allows to get an unbiased of the prediction risk estimation $\EE_\epsilon \norm{\widehat{\mu}(y) - \mu_0}^2$ through e.g. the Mallow's $C_p$ \citep{mallows1973some}, the AIC \citep{akaike1973information} or the SURE \citep[Stein Unbiased Risk Estimate,][]{stein1981estimation}.
These quantities can serve as model selection criteria to assess the accuracy of a candidate model.

\subsection{Contributions} 
\label{sub:intro-contrib}

This paper establishes a general result (Theorem \ref{thm-local}) on local parameterization of any solution to the group Lasso \eqref{eq-group-lasso} as a function of the observation vector $y$. This local behavior result does not need $X$ to be full column rank. With such a result at hand, we derive an expression of the divergence of the group Lasso response. Using tools from semialgebraic geometry, we prove that this divergence formula is valid Lebesgue-almost everywhere (Theorem~\ref{thm-div}), and thus, this formula is a provably unbiased estimate of the DOF (Theorem~\ref{thm-dofsure}). In turn, this allows us to deduce an unbiased estimate of the prediction risk of the group Lasso through the SURE.

\subsection{Relation to prior works}

In the special case of standard Lasso with a linearly independent design, \citep{zou2007degrees} show that the number of nonzero coefficients is an unbiased estimate for the degrees of freedom. This work is generalized in \citep{2012-kachour-statsinica} to any arbitrary design matrix.
The DOF of the analysis sparse regularization (a.k.a.~generalized Lasso in statistics) is studied in \citep{tibshirani2012dof,vaiter-local-behavior}.

A formula of an estimate of the DOF for the group Lasso when the design is orthogonal within each group is conjectured in \citep{yuan2006model}. Its unbiasedness is proved but only for an orthogonal design. \citep{kato2009degrees} studies the DOF of a general shrinkage estimator where the regression coefficients are constrained to a closed convex set $C$. This work extended that of \citep{MeyerWoodroofe} which treats the case where $C$ is a convex polyhedral cone. When $\XX$ is full column rank, \citep{kato2009degrees} derived a divergence formula under a smoothness condition on the boundary of $C$, from which he obtained an unbiased estimator of the degrees of freedom. When specializing to the constrained version of the group Lasso, the author provided an unbiased estimate of the corresponding DOF under the same group-wise orthogonality assumption on $\XX$ as \citep{yuan2006model}. An estimate of the DOF for the group Lasso is also given by \citep{solo2010threshold} using  heuristic derivations that are valid only when $\XX$ is full column rank, though its unbiasedness is not proved. 

In \citep{vaiter-icml-workshops}, we derived an estimator of the DOF of the group Lasso and proved its unbiasedness when $\XX$ is full column rank, but without the orthogonality assumption required in \citep{yuan2006model,kato2009degrees}. In this paper, we remove the full column rank assumption, which enables us to tackle the much more challenging rank-deficient or underdetermined case where $p > n$.

\subsection{Notations} 
\label{sub:notations}

We start by some notations used in the rest of the paper.
We extend the notion of support, commonly used in sparsity by defining the \bsup $\bs(\xx)$ of $\xx \in \RR^n$ as
\begin{equation*}
  \bs(\xx) = \enscond{b \in \Bb}{\norm{\xx_b} \neq 0}.
\end{equation*}
The size of $\bs(\xx)$ is defined as $\abs{\bs(\xx)} = \sum_{b \in \Bb} \abs{b}$.
The set of all $\Bb$-supports is denoted $\Ii$.
We denote by $\XX_{I}$, where $I$ is a \bsup, the matrix formed by the columns $\XX_i$ where $i$ is an element of $b \in I$.
To lighten the notation in our derivations, we introduce the following block-diagonal operators
\begin{align*}
  &\delta_\xx : v \in \RR^{|I|} \mapsto ( v_b / \norm{\xx_b} )_{b \in I} \in \RR^{|I|} \\
  \qandq &P_\xx : v \in \RR^{|I|} \mapsto ( \mathrm{Proj}_{\xx_b^\bot}( v_b ) )_{b \in I} \in \RR^{|I|} ~,
\end{align*}
where $\mathrm{Proj}_{\xx_b^\bot} = \Id - \xx_b \xx_b^{\ins{T}}$ is the orthogonal projector on $\xx_b^\bot$.
For any matrix $A$, $A^{\ins{T}}$ denotes its transpose.

\subsection{Paper organization} 
\label{sub:organization}
The paper is organized as follows. Sensitivity analysis of the group Lasso solutions to perturbations of the observations is given in Section~\ref{sec:local}. Then we turn to the degrees of freedom and unbiased prediction risk estimation in Section~\ref{sec:dof}. The proofs are deferred to Section~\ref{sec:proofs} awaiting inspection by the interested reader.


\section{Local Behavior of the Group Lasso} 
\label{sec:local}

The first difficulty we need to overcome when $\XX$ is not full column rank is that $\xsoly(y)$ is not uniquely defined. Toward this goal, we are led to impose the following assumption on $\XX$ with respect to the block structure.

\paragraph{{\bf Assumption} $({{\rm\bf{A}}}(\beta))$}: Given a vector $\beta \in \RR^p$ of \bsup $I$, we assume that the finite subset of vectors $\enscond{\XX_b^{} \xx_b}{b \in I}$ is linearly independent.

\medskip

It is important to notice that $({\rm\bf{A}}(\beta))$ is weaker than imposing that $\XX_I$ is full column rank, which is standard when analyzing the Lasso. The two assumptions coincide for the Lasso, i.e. $|b|=1, \forall b \in I$.\\


Let us now turn to sensitivity of the minimizers $\xsoly(y)$ of \lasso to perturbations of $y$. Toward this end, we will exploit the fact that $\xsoly(y)$ obeys an implicit parameterization. But as optimal solutions turns out to be not everywhere differentiable, we will concentrate on a local analysis where  $y$ is allowed to vary in a neighborhood where non-differentiability will not occur. This is why we need to introduce the following transition space $\Hh$.

\begin{defn}\label{defn:h}
  Let $\lambda > 0$.
  The \emph{transition space} $\Hh$ is defined as
  	\begin{align*}
    		\Hh = \bigcup_{I \subset \Ii} \bigcup_{b \not\in I} \; \Hh_{I,b},
	\qwhereq \Hh_{I,b} &= \bd(\pi(\Aa_{I,b})) ,
	\end{align*}
	where we have denoted  
	\eq{
		\pi : \RR^n \times \RR^{I,*} \times \RR^{I,*} \to \RR^n
		\qwhereq \RR^{I,*} = \prod_{b \in I} (\RR^{|b|} \setminus \{0\})
	}
	the canonical projection on $\RR^n$ (with respect to the first component), $\bd C$ is the boundary of the set $C$, and 
\begin{align*}
  \Aa_{I,b} =
  \Big\{ &(y, \xx_I, v_I) \in \RR^n \times \RR^{I,*} \times \RR^{I,*} \; \setminus \\
    &\norm{\XX_b^{\ins{T}} (y - \XX_I \xx_I) } = \lambda, \\
    &\XX_I^{\ins{T}} (\XX_I \xx_I - y) + \lambda v_I = 0, \\
    &\forall g \in I, v_g = \frac{\xx_g}{\norm{\xx_g}}
  \Big\} ~.
\end{align*}
\end{defn}

We are now equipped to state our main sensitivity analysis result.
\begin{thm}\label{thm-local}
  Let $\lambda > 0$.
  Let $y \not\in \Hh$, and $\xsoly(y)$ a solution of \lasso.
  Let $I = \bs(\xsoly(y))$ be the \bsup of $\xsoly(y)$ such that $({\rm\bf{A}}(\xsoly(y)))$ holds.
  Then, there exists an open neighborhood of $y$ $\Oo \subset \RR^n$, and a mapping $\solm : \Oo \to \RR^p$ such that
  \begin{enumerate}
    \item For all $\bar y \in \Oo$, $\solmB$ is a solution of $(\lassoB)$, and $\solm(y) = \xsoly(y)$.
    \item the \bsup of $\solmB$ is constant on $\Oo$, i.e.~
    \begin{equation*}
      \forall \bar y \in \Oo, \quad \bs(\solmB) = I,
    \end{equation*}
    \item the mapping $\solm$ is $\Cc^1(\Oo)$ and its Jacobian is such that $\forall \bar y \in \Oo$,
      \begin{align}\label{eq-differential}
        \partial_{\bar y} \solm_{I^c}(\bar y) & = 0 \qandq
        \partial_{\bar y} \solm_I(\bar y) = d(y,\lambda) \\
      \qwhereq d(y,\lambda) & = \bpa{ \XXX + \lambda \delta_{\xsoly(y)}  \circ P_{\xsoly(y)}  }^{-1} \XX_I^{\ins{T}} \\
      \qandq I^c & = \enscond{b \in \Bb}{b \notin I}.
    \end{align}
  \end{enumerate}
\end{thm}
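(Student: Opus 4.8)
\emph{Proof proposal.} The plan is to characterize the solutions of \eqref{eq-group-lasso} through their first-order optimality (KKT) conditions and then build $\solm$ by the implicit function theorem restricted to the active blocks. By convexity, a vector $\xx$ with \bsup $I$ solves \eqref{eq-group-lasso} if and only if
\begin{align*}
  \XX_I^{\ins{T}}(\XX_I \xx_I - y) + \lambda v_I &= 0, \quad v_g = \frac{\xx_g}{\norm{\xx_g}} \text{ for } g \in I, \\
  \norm{\XX_b^{\ins{T}}(y - \XX_I \xx_I)} &\leq \lambda \quad \text{for all } b \notin I,
\end{align*}
where I already used $\xx_{I^c}=0$, so that $\XX\xx = \XX_I \xx_I$. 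The equality is a smooth system in $(y,\xx_I)$ as long as the active blocks stay nonzero, and it is this system I would invert for $\xx_I$ as a function of $y$; the inequalities are what must remain satisfied for the constructed candidate to stay a global minimizer.

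First I would use the hypothesis $y \notin \Hh$ to freeze the \bsup locally. By the very construction of $\Aa_{I,b}$ and of $\Hh_{I,b} = \bd(\pi(\Aa_{I,b}))$, staying off $\Hh$ forces each inactive constraint to hold strictly, $\norm{\XX_b^{\ins{T}}(y - \XX_I \xsoly(y)_I)} < \lambda$ for $b \notin I$. Together with $\norm{\xsoly(y)_b} \neq 0$ for $b \in I$, both the strict inequalities and the nonvanishing of the active blocks persist on a small ball around $y$ by continuity. This is exactly what yields points 1 and 2 once the map $\solm$ is produced: setting $\solm_{I^c}=0$, the inactive inequalities hold automatically and the subgradient inclusion is satisfied, so $\solmB$ is a genuine solution with $\bs(\solmB)=I$.

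The core of the argument is the implicit function theorem for
\[
  F(y,\xx_I) = \XX_I^{\ins{T}}(\XX_I \xx_I - y) + \lambda \, v_I , \qquad v_g = \frac{\xx_g}{\norm{\xx_g}} ,
\]
which is $\Cc^1$ near $(y, \xsoly(y)_I)$ (the active blocks being nonzero there) and satisfies $F(y,\xsoly(y)_I)=0$. Differentiating the normalization $\xx_b \mapsto \xx_b/\norm{\xx_b}$ produces the block operator $\delta_{\xx}\circ P_{\xx}$, so that $\partial_{\xx_I} F = \XXX + \lambda\, \delta_{\xx}\circ P_{\xx}$. The main obstacle is to establish invertibility of this matrix \emph{without} assuming $\XX_I$ full column rank. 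Since $\delta_{\xx}$ acts as a positive scalar on each block, it commutes with the block projector, so $\delta_{\xx}\circ P_{\xx}$ is symmetric positive semidefinite and hence so is $\partial_{\xx_I}F$. I would then argue positive definiteness: if $u^{\ins{T}}(\partial_{\xx_I}F)\,u = 0$ then $\norm{\XX_I u}^2 = 0$ and $\mathrm{Proj}_{\xx_b^\bot}(u_b)=0$ for every $b\in I$, the latter forcing $u_b = \alpha_b \xx_b$. Substituting into $\XX_I u = \sum_{b\in I}\alpha_b \XX_b \xx_b = 0$ and invoking Assumption $({\rm\bf A}(\xsoly(y)))$, namely the linear independence of $\enscond{\XX_b \xx_b}{b\in I}$, yields $\alpha_b = 0$ for all $b$, hence $u=0$. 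This is precisely where the weaker assumption $({\rm\bf A})$ replaces full column rank.

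With invertibility in hand, the implicit function theorem furnishes an open neighborhood $\Oo$ of $y$ and a $\Cc^1$ map $\bar y \mapsto \solm_I(\bar y)$ solving $F=0$, which I extend by $\solm_{I^c}=0$; points 1 and 2 follow from the second paragraph. Finally, differentiating the identity $F(\bar y,\solm_I(\bar y))=0$ and using $\partial_{\bar y}F = -\XX_I^{\ins{T}}$ gives
\[
  \partial_{\bar y}\solm_I = -(\partial_{\xx_I}F)^{-1}\,\partial_{\bar y}F = \bpa{\XXX + \lambda\,\delta_{\xsoly(y)}\circ P_{\xsoly(y)}}^{-1}\XX_I^{\ins{T}} ,
\]
together with $\partial_{\bar y}\solm_{I^c}=0$, which is exactly \eqref{eq-differential}.
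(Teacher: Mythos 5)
Your overall architecture (stationarity equation restricted to the active blocks, invertibility of $\XXX + \lambda\,\delta_{\xsoly(y)}\circ P_{\xsoly(y)}$ under Assumption $({\rm\bf{A}}(\xsoly(y)))$, implicit function theorem, extension by zero on $I^c$) coincides with the paper's, and your invertibility argument is correct --- your identification of the kernel of $\delta_{\xx}\circ P_{\xx}$ with the block-wise multiples $(\alpha_b\xx_b)_{b\in I}$ is in fact more careful than the paper's own wording. The gap lies in the step that makes the theorem nontrivial: proving that $\solm(\bar y)$ stays a \emph{global} minimizer for $\bar y$ near $y$. You assert that $y\notin\Hh$ ``forces each inactive constraint to hold strictly''. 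This is false. By definition, $\Hh_{I,b}=\bd(\pi(\Aa_{I,b}))$ is only the \emph{boundary} of the saturation set $\pi(\Aa_{I,b})$, and membership of $y$ in $\pi(\Aa_{I,b})$ is precisely saturation of the inactive block $b$, namely the existence of a solution $\xx_I$ of the restricted problem \eqref{eq:restricted} with $\norm{\XX_b^{\ins{T}}(y-\XX_I\xx_I)}=\lambda$. A point $y\notin\Hh$ can perfectly well have saturated inactive blocks; the hypothesis only forces such a $y$ to lie in the \emph{interior} of $\pi(\Aa_{I,b})$. (If one redefined $\Hh_{I,b}$ to be all of $\pi(\Aa_{I,b})$, your claim would hold, but then the measure-zero argument behind Theorem~\ref{thm-dofsure} would collapse: only the boundary of a semialgebraic set is guaranteed to have dimension strictly smaller than $n$, while the projection itself may be too large.) Consequently, for a saturated block your continuity argument proves nothing: a perturbation of $y$ could a priori push $\norm{\XX_b^{\ins{T}}(\bar y-\XX_I\solm_I(\bar y))}$ above $\lambda$, in which case $\solm(\bar y)$ violates the optimality conditions of Lemma~\ref{lem:first-order} and points 1 and 2 fail.

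The paper closes exactly this case with an argument your proposal is missing. For a saturated block $b$, the facts $y\in\pi(\Aa_{I,b})$ and $y\notin\bd(\pi(\Aa_{I,b}))$ yield an open ball $\mathbb{B}(y,\epsilon)\subset\pi(\Aa_{I,b})$; hence every $\bar y$ in this ball admits a solution $\bar\xx_I$ of the restricted problem with block $b$ saturated. Since all solutions of the restricted problem share the same fitted value $\XX_I\bar\xx_I$ (the same strict-convexity argument as in Lemma~\ref{lem:same-image}), this saturation transfers to $\solm_I(\bar y)$, i.e. $\norm{\XX_b^{\ins{T}}(\bar y-\XX_I\solm_I(\bar y))}=\lambda$ identically on the ball. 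So saturated constraints remain exactly saturated --- they never cross $\lambda$ --- and global optimality of $\solm(\bar y)$ follows. Without this step, and the uniqueness-of-fitted-value lemma it relies on, the theorem is unproven precisely in the degenerate situation that the transition space $\Hh$ was designed to handle.
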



\section{Degrees of freedom and Risk Estimation}
\label{sec:dof}

As remarked earlier and stated formally in Lemma~\ref{lem:same-image}, all solutions of the Lasso share the same image under $\XX$, hence allowing us to denote the prediction $\widehat{\mu}(y)$ without ambiguity as a single-valued mapping. The next theorem provides a closed-form expression of the local variations of $\widehat{\mu}(y)$ with respect to the observation $y$. In turn, this will yield an unbiased estimator of the degrees of freedom and of the prediction risk of the group Lasso.
\begin{thm}\label{thm-div}
  Let $\lambda > 0$.
  For all $y \not\in \Hh$, there exists a solution $\xsoly(y)$ of \lasso with \bsup $I = \bs(\xsoly(y))$ such that $({\rm\bf{A}}(\xsoly(y)))$ is fulfilled. 
  Moreover, The mapping $y \mapsto \widehat{\mu}(y)=X\xsoly(y)$ is $\Cc^1(\RR^n \setminus \Hh)$ and,
  \begin{equation}\label{eq:diverg}
    \diverg(\widehat{\mu}(y)) = \tr (\XX_I d(y,\lambda))
  \end{equation}
  where $\xsoly(y)$ is such that $({\rm\bf{A}}(\xsoly(y)))$ holds.
\end{thm}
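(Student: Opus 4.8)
The plan is to prove the three assertions in sequence: the existence, for every $y \notin \Hh$, of a minimizer whose $\Bb$-support $I$ satisfies $({\rm\bf{A}})$; the $\Cc^1$ regularity of $\widehat{\mu}$; and the divergence formula. The last two follow almost mechanically from the local parameterization of Theorem~\ref{thm-local} once the first is secured, so the substance lies in the existence claim.

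For existence, I would use that the minimizer set $\mathcal{S}(y)$ of \eqref{eq-group-lasso} is a nonempty convex compact set, as noted after its definition, and pick an \emph{extreme point} $\xx^\star$ of $\mathcal{S}(y)$, with support $I = \bs(\xx^\star)$. I claim $({\rm\bf{A}}(\xx^\star))$ holds. Suppose not: then $\sum_{b \in I} \alpha_b \XX_b \xx^\star_b = 0$ for scalars $\alpha_b$ not all zero. Setting $\eta_b = \alpha_b \xx^\star_b$ for $b \in I$ and $\eta_b = 0$ otherwise, one has $\eta \neq 0$ and $\XX \eta = 0$, so $\xx^\star \pm t\eta$ leaves the residual $y - \XX\xx$ unchanged. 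For $|t|$ small enough that $1 + t\alpha_b > 0$ on $I$, the penalty equals $\sum_{b \in I} |1 + t\alpha_b|\, \norm{\xx^\star_b} = \sum_{b \in I}\norm{\xx^\star_b} + t\,c$ with $c = \sum_{b \in I}\alpha_b \norm{\xx^\star_b}$, so the objective at $\xx^\star \pm t\eta$ is the optimal value shifted by $\pm \lambda t c$. Optimality of $\xx^\star$ forces $c = 0$, whence both $\xx^\star + t\eta$ and $\xx^\star - t\eta$ are minimizers; as $\xx^\star$ is their midpoint and $\eta \neq 0$, this contradicts extremality. Hence $({\rm\bf{A}}(\xx^\star))$ holds, and $\xsoly(y) = \xx^\star$ is the sought solution.

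For regularity and the formula, fix $y \notin \Hh$ and the solution $\xsoly(y)$ just produced, with support $I$ satisfying $({\rm\bf{A}})$. Theorem~\ref{thm-local} supplies an open neighborhood $\Oo$ of $y$ and a $\Cc^1$ map $\solm : \Oo \to \RR^p$ such that $\solm(\bar y)$ is a minimizer at $\bar y$ with constant support $I$. By Lemma~\ref{lem:same-image} the prediction is single-valued, so $\widehat{\mu}(\bar y) = \XX \solm(\bar y)$ on $\Oo$; being the composition of the linear map $\XX$ with the $\Cc^1$ map $\solm$, $\widehat{\mu}$ is $\Cc^1$ on $\Oo$, and since such a neighborhood surrounds every point of $\RR^n \setminus \Hh$, we get $\widehat{\mu} \in \Cc^1(\RR^n \setminus \Hh)$. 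Differentiating and splitting over $I$ and $I^c$ gives $\partial_{\bar y}\widehat{\mu} = \XX_I\, \partial_{\bar y}\solm_I + \XX_{I^c}\, \partial_{\bar y}\solm_{I^c}$; substituting the Jacobian~\eqref{eq-differential}, i.e.\ $\partial_{\bar y}\solm_{I^c} = 0$ and $\partial_{\bar y}\solm_I = d(y,\lambda)$, yields $\partial_{\bar y}\widehat{\mu}(y) = \XX_I\, d(y,\lambda)$ and hence $\diverg(\widehat{\mu}(y)) = \tr(\XX_I\, d(y,\lambda))$. Although $I$ and $d(y,\lambda)$ depend on the chosen solution, the trace does not, since $\widehat{\mu}$ and its Jacobian are intrinsically defined.

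The main obstacle I expect is the existence step, since Theorem~\ref{thm-local} presupposes a minimizer whose active blocks satisfy $({\rm\bf{A}})$; the delicate point there is to confirm that the perturbation $\xx^\star \pm t\eta$ stays inside the minimizer set, i.e.\ that the first-order penalty variation $c$ must vanish, so that extremality can be contradicted. By contrast, the regularity and the trace computation are routine once the local $\Cc^1$ parameterization of Theorem~\ref{thm-local} and the single-valuedness of $\widehat{\mu}$ from Lemma~\ref{lem:same-image} are available.
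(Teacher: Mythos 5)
Your proof is correct. Its second half---the local $\Cc^1$ parameterization from Theorem~\ref{thm-local}, single-valuedness of $\widehat{\mu}$ from Lemma~\ref{lem:same-image}, then taking the trace of the Jacobian $\XX_I d(y,\lambda)$---is exactly the paper's route, and you are if anything more careful than the paper in passing from the local selection $\solm$ to the intrinsic map $\widehat{\mu}$. The existence step is where you genuinely diverge. The paper proves it by support reduction: starting from a solution $\xx^0$ violating $({\rm\bf{A}})$, it perturbs along the same kernel direction you use, $\xx^t_{b_i}=(1+t\mu_i)\xx^0_{b_i}$, checks via the optimality conditions of Lemma~\ref{lem:first-order} that $\xx^t$ remains a solution for $t\in[0,t_0)$, where $t_0$ is the first value at which some coefficient $1+t\mu_i$ vanishes, passes to $t_0$ by continuity, and so obtains a solution whose \bsup is strictly smaller; finitely many iterations then produce a solution satisfying $({\rm\bf{A}})$. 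You instead take an extreme point $\xx^\star$ of the nonempty convex compact solution set, show that a dependence relation forces the first-order penalty variation $c=\sum_{b\in I}\alpha_b\norm{\xx^\star_b}$ to vanish, and exhibit $\xx^\star$ as the midpoint of the two distinct solutions $\xx^\star\pm t\eta$, contradicting extremality. Both arguments hinge on the same computation---a kernel perturbation fixes the residual and varies the penalty affinely in $t$---but yours is shorter and non-iterative, at the price of invoking the existence of extreme points of a compact convex set in $\RR^p$ (Minkowski/Krein--Milman), whereas the paper's is constructive: an explicit finite procedure that converts any given solution into one with smaller support satisfying $({\rm\bf{A}})$. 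Your variant also delivers a slightly sharper by-product, namely that \emph{every} extreme point of the solution set satisfies $({\rm\bf{A}})$.
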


\begin{thm}\label{thm-dofsure}
Let $\lambda > 0$. Assume $y = \XX \xx_0 + \epsilon$ where $\epsilon \sim \Nn(0,\sigma^2 \Id_n)$. The set $\Hh$ has Lebesgue measure zero, and therefore \eqref{eq:diverg} is an unbiased estimate of the DOF of the group Lasso. Moreover, an unbiased estimator of the prediction risk $\EE_\epsilon \norm{\widehat{\mu}(y) - \mu_0}^2$ is given by the $\SURE$ formula
\begin{align}
\label{eq:sure}
    \SURE(\widehat{\mu}(y)) = &
    \norm{y - \widehat{\mu}(y)}^2
    - n \sigma^2
    + 2 \sigma^2 \tr (\XX_I d(y,\lambda))~.
\end{align}
\end{thm}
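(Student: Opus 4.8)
The proof combines a measure-theoretic argument for $\Hh$ with an application of Stein's lemma, the latter requiring a global regularity of $\widehat\mu$ that goes beyond the $\Cc^1(\RR^n\setminus\Hh)$ statement of Theorem~\ref{thm-div}.

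\textbf{Step 1 (the transition space is negligible).} The plan is to show each $\Hh_{I,b}$ is Lebesgue-negligible and then conclude, since $\Bb$ (hence $\Ii$) is finite so that $\Hh$ is a \emph{finite} union. First I would verify that $\Aa_{I,b}$ is a semialgebraic subset of $\RR^n\times\RR^{I,*}\times\RR^{I,*}$: the constraint $\norm{\XX_b^{\ins{T}}(y-\XX_I\xx_I)}=\lambda$ is equivalent to the polynomial equation $\norm{\XX_b^{\ins{T}}(y-\XX_I\xx_I)}^2=\lambda^2$, the stationarity relation $\XX_I^{\ins{T}}(\XX_I\xx_I-y)+\lambda v_I=0$ is affine, and the normalization $v_g=\xx_g/\norm{\xx_g}$ is semialgebraic since it is equivalent to the existence of $t>0$ with $\xx_g=t\,v_g$ and $\norm{v_g}^2=1$ (an existential quantifier over polynomial constraints). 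By the Tarski--Seidenberg theorem the projection $\pi(\Aa_{I,b})$ is again semialgebraic in $\RR^n$. The structural fact I would then invoke is that the topological boundary of a semialgebraic set has empty interior: $\bd(\pi(\Aa_{I,b}))$ is semialgebraic, and were it to contain an open ball $B$, then $\pi(\Aa_{I,b})\cap B$ would be a semialgebraic set dense in $B$ yet disjoint from $\mathrm{int}(\pi(\Aa_{I,b}))$, which is impossible because a semialgebraic set of dimension $<n$ is nowhere dense. Hence $\dim\Hh_{I,b}<n$, so $\Hh_{I,b}$ is Lebesgue-null, and so is $\Hh$.

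\textbf{Step 2 (unbiasedness of the divergence).} Theorem~\ref{thm-div} supplies $\widehat\mu\in\Cc^1(\RR^n\setminus\Hh)$ together with the pointwise formula \eqref{eq:diverg}, but differentiability off a null set does not by itself license Stein's lemma: I need weak differentiability with the integrability $\EE_\epsilon|\partial_{y_i}\widehat\mu_i|<\infty$. The missing ingredient is that $y\mapsto\widehat\mu(y)$ is nonexpansive, which I would obtain from monotonicity of the subdifferential of the penalty. Writing $\mu_j=\XX\xsoly(y_j)$ for the predictions at $y_1,y_2$ (well defined by Lemma~\ref{lem:same-image}), the optimality conditions $\XX^{\ins{T}}(y_j-\mu_j)\in\lambda\,\partial\pa{\sum_{b\in\Bb}\norm{(\cdot)_b}}(\xsoly(y_j))$ combined with monotonicity yield $\langle y_1-y_2,\mu_1-\mu_2\rangle\geq\norm{\mu_1-\mu_2}^2$, whence $\norm{\mu_1-\mu_2}\leq\norm{y_1-y_2}$ by Cauchy--Schwarz. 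Being globally Lipschitz, $\widehat\mu$ is weakly differentiable with weak Jacobian bounded by $1$ (so the integrability condition holds trivially), and since its classical and weak derivatives agree off the null set $\Hh$, the weak divergence equals $\tr(\XX_I d(y,\lambda))$ for almost every $y$. Stein's lemma then gives $\EE_\epsilon[\tr(\XX_I d(y,\lambda))]=\DOF$.

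\textbf{Step 3 (SURE).} This is the standard consequence. Expanding $\norm{\widehat\mu-\mu_0}^2=\norm{\widehat\mu-y}^2+2\langle\widehat\mu-y,\,y-\mu_0\rangle+\norm{y-\mu_0}^2$ and taking expectations, I would use $\EE_\epsilon\norm{y-\mu_0}^2=n\sigma^2$ and, via Stein's lemma once more, $\EE_\epsilon\langle\widehat\mu-y,\epsilon\rangle=\sigma^2\,\EE_\epsilon[\diverg(\widehat\mu)]-n\sigma^2$. Collecting terms shows that $\norm{y-\widehat\mu}^2-n\sigma^2+2\sigma^2\,\diverg(\widehat\mu)$ is an unbiased estimator of $\EE_\epsilon\norm{\widehat\mu-\mu_0}^2$; substituting \eqref{eq:diverg} yields \eqref{eq:sure}.

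\textbf{Main obstacle.} I expect Step~1 to be the delicate point: the argument hinges on encoding the normalization constraint semialgebraically and, above all, on the fact that the boundary of the \emph{projection} of a semialgebraic set has dimension strictly less than $n$. The Lipschitz/weak-differentiability bridge of Step~2 is conceptually routine but indispensable, as it is precisely what upgrades the almost-everywhere $\Cc^1$ statement into the hypotheses genuinely demanded by Stein's lemma.
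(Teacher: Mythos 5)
Your proposal is correct and follows essentially the same route as the paper: semialgebraicity of $\Aa_{I,b}$ plus Tarski--Seidenberg and the dimension-drop of semialgebraic boundaries to show $\Hh$ is Lebesgue-null, then global Lipschitzness of $\widehat{\mu}$ to upgrade the $\Cc^1(\RR^n \setminus \Hh)$ regularity to weak differentiability so that Stein's lemma applies, and finally the standard SURE identity. The only difference is one of detail, not of method: where the paper simply asserts that $\widehat{\mu}$ is uniformly Lipschitz by appeal to arguments of Meyer and Woodroofe, you supply the proof via monotonicity of the subdifferential (firm nonexpansiveness of $y \mapsto \widehat{\mu}(y)$), which is a welcome filling-in of a step the paper delegates to a citation.
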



Although not given here explicitly, Theorem~\ref{thm-dofsure} can be straightforwardly extended to
unbiasedly of measures of the risk, including the {\it projection} risk, or the {\it estimation}
risk (in the full rank case) through the Generalized Stein Unbiased Risk Estimator as proposed in \citep{vaiter-local-behavior}.\\

An immediate corollary of Theorem~\ref{thm-dofsure} is obtained when $\XX$ is orthogonal, and without loss of generality $\XX=\Id_n$, i.e. $\widehat{\mu}(y)$ is the block soft thresholding estimator. We then recover the expression found by~\citep{yuan2006model}.
\begin{cor}\label{cor-particular}
  If $\XX = \Id_n$, then
  \begin{equation*}
    \widehat{\DOF} = \abs{I} - \lambda \sum_{b \in I} \dfrac{\abs{b} - 1}{\norm{y_b}}
  \end{equation*}
  where $I = \bigcup \enscond{b \in \Bb}{\norm{y_b} > \lambda}$.
  Moreover, the $\SURE$ is given by
  \begin{align*}
    \SURE(\widehat{\mu}(y)) = &
    -n \sigma^2
    +
    (2 \sigma^2 + \lambda^2) \abs{I}
    + \sum_{b \notin I} \norm{y_b}^2
    - 2 \sigma^2 \lambda \sum_{b \in I} \dfrac{\abs{b} - 1}{\norm{y_b}} ~.
  \end{align*}
\end{cor}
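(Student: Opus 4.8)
The plan is to exploit that, when $\XX = \Id_n$, the data-fidelity term $\norm{y - \XX\xx}^2 = \sum_{b \in \Bb}\norm{y_b - \xx_b}^2$ splits across the disjoint blocks, so that \eqref{eq-group-lasso} decouples into independent group soft-thresholdings $\min_{\xx_b \in \RR^{\abs{b}}} \frac{1}{2}\norm{y_b - \xx_b}^2 + \lambda\norm{\xx_b}$. First I would solve each block subproblem through its stationarity condition $\xx_b(1 + \lambda/\norm{\xx_b}) = y_b$: taking norms gives $\norm{\xx_b} = \norm{y_b} - \lambda$ when $\norm{y_b} > \lambda$ and $\xx_b = 0$ otherwise, whence $\xsoly_b(y) = (1 - \lambda/\norm{y_b})\,y_b$ and the \bsup $I = \bigcup\enscond{b \in \Bb}{\norm{y_b} > \lambda}$. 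Because the coordinate subspaces carried by distinct blocks are mutually orthogonal, the family $\enscond{\XX_b \xx_b}{b \in I} = \enscond{\xx_b}{b \in I}$ is automatically linearly independent, so $({\rm\bf{A}}(\xsoly(y)))$ holds for free; together with the fact that $\Hh$ is Lebesgue-negligible (Theorem~\ref{thm-dofsure}), this licenses the use of \eqref{eq:diverg} for almost every $y$.

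Next I would specialize the divergence. Since $\XX = \Id_n$ one has $\XX_I^{\ins{T}}\XX_I = \Id_{\abs{I}}$, so the cyclic invariance of the trace collapses \eqref{eq:diverg} to $\tr(\XX_I d(y,\lambda)) = \tr\bpa{(\Id_{\abs{I}} + \lambda\,\delta_{\xsoly(y)} \circ P_{\xsoly(y)})^{-1}}$. The operator $\delta_{\xsoly(y)} \circ P_{\xsoly(y)}$ is block-diagonal: writing $\xx = \xsoly(y)$, it acts on block $b \in I$ as $\norm{\xx_b}^{-1}\mathrm{Proj}_{\xx_b^\bot}$. Since $\xx_b = (1 - \lambda/\norm{y_b})\,y_b$ is colinear with $y_b$ and has norm $\norm{y_b} - \lambda$, this equals $\frac{1}{\norm{y_b} - \lambda}\mathrm{Proj}_{y_b^\bot}$, so the $b$-block to be inverted is $\Id + \frac{\lambda}{\norm{y_b} - \lambda}\mathrm{Proj}_{y_b^\bot}$, whose spectrum is $1$ (multiplicity one, along $y_b$) and $\frac{\norm{y_b}}{\norm{y_b} - \lambda}$ (multiplicity $\abs{b} - 1$, on $y_b^\bot$). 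Inverting and reading off the trace gives a per-block contribution $\abs{b} - \lambda(\abs{b} - 1)/\norm{y_b}$; summing over $b \in I$ produces $\widehat{\DOF} = \abs{I} - \lambda\sum_{b \in I}(\abs{b} - 1)/\norm{y_b}$.

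For the SURE I would simply insert this divergence into \eqref{eq:sure}. The residual decomposes blockwise: an inactive block contributes $\norm{y_b}^2$, while for an active block $y_b - \widehat{\mu}_b(y) = \lambda\,y_b/\norm{y_b}$ contributes exactly $\lambda^2$; collecting these against the $-n\sigma^2$ and $2\sigma^2\,\tr(\XX_I d(y,\lambda))$ terms yields the announced closed form. I do not anticipate a genuine obstacle, as the whole argument is a specialization of Theorems~\ref{thm-div}--\ref{thm-dofsure}. The one point demanding care is the spectral bookkeeping of $\mathrm{Proj}_{y_b^\bot}$, whose rank is $\abs{b} - 1$ rather than $\abs{b}$: this is exactly what generates the $(\abs{b} - 1)$ factor distinguishing the group Lasso DOF from the plain Lasso count $\abs{I}$, and getting this multiplicity right is the crux of the computation.
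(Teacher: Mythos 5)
Your derivation of the DOF is correct and is essentially the paper's own proof: both reduce \eqref{eq:diverg} to the trace of the block-diagonal inverse $\bpa{\Id + \lambda \delta_{\xsoly(y)} \circ P_{\xsoly(y)}}^{-1}$, and both extract that trace from the fact that $\mathrm{Proj}_{\xsoly_b(y)^\bot}$ has rank $\abs{b}-1$; whether one substitutes the block soft-thresholding formula before or after computing the trace is immaterial. Your remark that $({\rm\bf{A}}(\xsoly(y)))$ holds automatically when $\XX=\Id_n$ is a detail the paper leaves implicit, and is welcome.

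The SURE part, however, contains a genuine gap, and it sits exactly in the step you waved through with ``collecting these \dots yields the announced closed form.'' Your residual computation is right: each active block contributes $\norm{y_b - \widehat{\mu}_b(y)}^2 = \lambda^2$ \emph{independently of its size}, so
\begin{equation*}
\norm{y-\widehat{\mu}(y)}^2 = \sum_{b\notin I}\norm{y_b}^2 + \lambda^2\,\sharp I ,
\end{equation*}
where $\sharp I$ denotes the \emph{number} of active blocks. Plugging this and your DOF into \eqref{eq:sure} gives
\begin{equation*}
\SURE(\widehat{\mu}(y)) = -n\sigma^2 + 2\sigma^2\abs{I} + \lambda^2\,\sharp I + \sum_{b\notin I}\norm{y_b}^2 - 2\sigma^2\lambda\sum_{b\in I}\frac{\abs{b}-1}{\norm{y_b}} ,
\end{equation*}
whereas the corollary asserts $(2\sigma^2+\lambda^2)\abs{I}$, i.e. a term $\lambda^2\abs{I}$ with $\abs{I}=\sum_{b\in I}\abs{b}$ --- which is the meaning $\abs{I}$ must carry for the DOF formula to be true (the paper's proof ends with ``As $\abs{I}=\sum_{b \in I} \abs{b}$, we get the desired result''). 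The two expressions agree only when every active block is a singleton, i.e. for the plain Lasso. So the collection you claim to have performed does not in fact reproduce the stated formula; had you carried it out, you would have hit this mismatch, which reveals an inconsistency (a notational slip) in the corollary's SURE expression rather than an error in your residual calculus. Note that the paper's own proof of the corollary only derives $\widehat{\DOF}$ and never assembles the SURE, so this is precisely the step that needed to be made explicit rather than asserted.
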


\newif\ifREL
\RELtrue
\ifREL
We finally quantify the (relative) reliability of the SURE by computing the expected squared-error between $\SURE(\widehat{\mu}(y))$ and the true squared-error
\eq{
\SE(\widehat{\mu}(y)) = \norm{\widehat{\mu}(y)-\mu_0}^2 ~.
}
\begin{prop}\label{prop:rel}
Under the assumptions of Theorem~\ref{thm-dofsure}, the relative reliability obeys
\eq{
\EE_w\left[\frac{\pa{\SURE(\widehat{\mu}(y))-\SE(\widehat{\mu}(y))}^2}{n^2\sigma^4}\right] \leq \frac{18+4\EE_w\pa{\norm{U_I}^2}}{n} + \frac{8\norm{\mu_0}^2}{n^2\sigma^2} ~.
}
\eq{
	\qwhereq U_I = X_I^\ins{T}X_I \bpa{ \XXX + \lambda \delta_{\xsoly(y)}  \circ P_{\xsoly(y)}  }^{-1}.
}
In particular, it decays at the rate  $O(1/n)$ if $\EE_w\pa{\norm{U_I}^2}=O(1)$.
\end{prop}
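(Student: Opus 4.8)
The plan is to compute the difference $\SURE(\widehat{\mu}(y))-\SE(\widehat{\mu}(y))$ in closed form and then bound its second moment piece by piece. Writing $y=\mu_0+\epsilon$ with $\epsilon=\sigma w$ and expanding $\norm{y-\widehat{\mu}(y)}^2=\norm{\epsilon}^2-2\langle \epsilon,\widehat{\mu}(y)-\mu_0\rangle+\SE(\widehat{\mu}(y))$, the definition \eqref{eq:sure} yields the exact decomposition
\begin{equation*}
\SURE(\widehat{\mu}(y))-\SE(\widehat{\mu}(y)) = \underbrace{\pa{\norm{\epsilon}^2-n\sigma^2}}_{A} + \underbrace{2\langle \epsilon,\mu_0\rangle}_{B} + \underbrace{2\sigma^2\diverg\widehat{\mu}(y)-2\langle \epsilon,\widehat{\mu}(y)\rangle}_{C}.
\end{equation*}
By Stein's lemma each of $A$, $B$, $C$ is centred — this is exactly the unbiasedness already recorded in Theorem~\ref{thm-dofsure} — so the object to control is $\EE_w[(A+B+C)^2]$, which I would bound by a few applications of $(a+b)^2\le 2a^2+2b^2$, reducing the task to estimating $\EE_w A^2$, $\EE_w B^2$ and $\EE_w C^2$ separately.

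The first two moments are elementary. Since $\norm{\epsilon}^2/\sigma^2$ is $\chi^2_n$ with variance $2n$, one gets $\EE_w A^2=2n\sigma^4$, contributing an $O(1/n)$ term after normalisation by $n^2\sigma^4$. Likewise $\langle \epsilon,\mu_0\rangle$ is centred Gaussian of variance $\sigma^2\norm{\mu_0}^2$, so $\EE_w B^2=4\sigma^2\norm{\mu_0}^2$, which produces a $\norm{\mu_0}^2/(n^2\sigma^2)$ contribution. The final numerical constants $18$ and $8$ then arise from collecting these with the analogous pieces coming out of $C$ (below) through the chosen splitting inequalities, so I would not try to pin them down until the end.

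The heart of the proof is the third term. Setting $Q=\langle \epsilon,\widehat{\mu}(y)\rangle-\sigma^2\diverg\widehat{\mu}(y)$, so that $C=-2Q$, I would evaluate $\EE_w Q^2$ by Gaussian integration by parts performed \emph{twice}. A first integration by parts on $\langle \epsilon,\widehat{\mu}\rangle$ turns $\EE_w Q^2$ into $\sigma^2\EE_w\langle \widehat{\mu},\nabla Q\rangle$; a second integration by parts on the remaining first-order factor produces a zeroth-order term, a genuine Jacobian term, and two Hessian terms that cancel exactly. The outcome is the clean identity
\begin{equation*}
\EE_w Q^2 = \sigma^2\EE_w\norm{\widehat{\mu}(y)}^2 + \sigma^4\EE_w\tr\pa{J(y)^2}, \qquad J(y)=\partial_y\widehat{\mu}(y)=\XX_I d(y,\lambda),
\end{equation*}
the Jacobian being identified through Theorem~\ref{thm-div}. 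This cancellation is decisive: bounding $\sigma^2\diverg\widehat{\mu}$ and $\langle \epsilon,\widehat{\mu}\rangle$ separately only gives $\EE_w C^2=O(n^2\sigma^4)$, i.e. an $O(1)$ reliability, whereas the exact formula is $O(n\sigma^4)$ and delivers the announced $O(1/n)$ rate. To finish I bound the two resulting quantities: writing $J=RS$ and $U_I=SR$ with $R=\XX_I(\XXX+\lambda\delta_{\xsoly(y)}\circ P_{\xsoly(y)})^{-1}$ and $S=\XX_I^{\ins{T}}$, one has $\tr(J^2)=\tr(U_I^2)\le \norm{U_I}_F^2$ (using $\tr(M^2)\le\norm{M}_F^2$), consistently with $\tr J=\tr U_I$ in Theorem~\ref{thm-div}; and $\EE_w\norm{\widehat{\mu}}^2\le \norm{\mu_0}^2+n\sigma^2$, which follows from the nonexpansiveness of the fit, $\norm{\widehat{\mu}(y)}\le\norm{y}$, obtained by comparing the energy minimised in \eqref{eq-group-lasso} at $\xsoly(y)$ with its value at $\xx=0$. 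Collecting $A$, $B$, $C$ and dividing by $n^2\sigma^4$ reproduces the three terms of the statement, whence the $O(1/n)$ decay when $\EE_w\norm{U_I}^2=O(1)$.

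The main obstacle is the second integration by parts: it formally involves the second derivatives of $\widehat{\mu}$, whereas Theorem~\ref{thm-div} only grants $\widehat{\mu}\in\Cc^1(\RR^n\setminus\Hh)$ with $\Hh$ Lebesgue-null. Since $\widehat{\mu}$ is globally Lipschitz (again by nonexpansiveness), it is weakly differentiable with bounded gradient, and the rigorous route is to mollify $\widehat{\mu}$ into a smooth $\widehat{\mu}_\delta$, apply the exact second-order Stein identity to $\widehat{\mu}_\delta$ where all manipulations are licit, and pass to the limit $\delta\to 0$ using the uniform Lipschitz bound (dominated convergence) together with the a.e.\ convergence of the Jacobians off $\Hh$. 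Checking that the spurious Hessian contributions cancel before the limit is taken — so that no singular mass supported on $\Hh$ survives — is the delicate point; everything else is routine Gaussian moment bookkeeping.
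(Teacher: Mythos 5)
Your overall strategy is, in substance, a self-contained re-derivation of the external result the paper leans on: the paper's own proof simply applies \cite[Theorem~4]{vaiter-local-behavior} to get the exact identity
\[
\EE_w\left[\pa{\SURE(\widehat{\mu}(y))-\SE(\widehat{\mu}(y))}^2\right]
= 2n\sigma^4 - 4\sigma^4\,\EE_w\!\left[\tr\pa{J(2\Id_n-J)}\right] + 4\sigma^2\,\EE_w\!\left[\norm{\widehat{\mu}(y)-\mu_0}^2\right],
\]
where $J=\partial_y\widehat{\mu}(y)=\XX_I d(y,\lambda)$, and then bounds the two remaining terms. Your decomposition $A+B+C$ is exact, the moments $\EE_w A^2=2n\sigma^4$ and $\EE_w B^2=4\sigma^2\norm{\mu_0}^2$ are right, and your double-Stein identity $\EE_w Q^2=\sigma^2\EE_w\norm{\widehat{\mu}(y)}^2+\sigma^4\EE_w\tr(J^2)$ is correct (the Hessian terms do cancel). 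Indeed, if you also compute the cross terms --- $\EE_w[AB]=0$, $2\EE_w[AC]=-8\sigma^4\EE_w\tr(J)$, $2\EE_w[BC]=-8\sigma^2\EE_w\dotp{\mu_0}{\widehat{\mu}(y)}$ --- all the pieces reassemble into exactly the displayed identity.

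The genuine gap is that you instead propose to discard those cross terms via inequalities of the type $(a+b+c)^2\le 3(a^2+b^2+c^2)$, and no such splitting can produce the constants in the statement. The coefficient $4$ on $\EE_w\norm{U_I}^2/n$ forces the coefficient multiplying $\EE_w C^2$ (the only source of the $\tr(J^2)$, hence $U_I$, term, with weight exactly $4\sigma^4$) to equal $1$; but any Young-type splitting that drops $\EE_w[AC]$ and $\EE_w[BC]$ necessarily puts a coefficient strictly larger than $1$ on $\EE_w C^2$. Concretely, your plan yields at best $\tfrac{18+12\EE_w\norm{U_I}^2}{n}+\tfrac{24\norm{\mu_0}^2}{n^2\sigma^2}$ (or $\tfrac{12+8\EE_w\norm{U_I}^2}{n}+\tfrac{16\norm{\mu_0}^2}{n^2\sigma^2}$ if you exploit $\EE_w[AB]=0$), which gives the $O(1/n)$ conclusion but not the stated inequality. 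The cross terms are not nuisance: $2\EE_w[AC]\le 0$ because $J$ is positive semidefinite (by Lemma~\ref{lem-invertible}), and $2\EE_w[BC]$ combines with $\EE_w B^2$ and the $4\sigma^2\EE_w\norm{\widehat{\mu}(y)}^2$ part of $\EE_w C^2$ into $4\sigma^2\EE_w\norm{\widehat{\mu}(y)-\mu_0}^2$, which the paper bounds by $8\sigma^2\norm{\mu_0}^2+16n\sigma^4$ via the minimizer property (comparing objectives at $\xx=0$), whence $2+16=18$, $4$ and $8$. So the fix is simply to keep the cross terms and evaluate them with the same Stein calculus you already set up.

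Two smaller points. Your bound $\EE_w\norm{\widehat{\mu}(y)}^2\le\norm{\mu_0}^2+n\sigma^2$ needs $\norm{\widehat{\mu}(y)}\le\norm{y}$, which does \emph{not} follow from comparing objective values at $\xx=0$ (that only gives $\norm{y-\widehat{\mu}(y)}\le\norm{y}$, hence $\norm{\widehat{\mu}(y)}\le 2\norm{y}$); use instead the first-order optimality condition, which yields $\dotp{y-\widehat{\mu}(y)}{\widehat{\mu}(y)}=\lambda\normg{\xsoly(y)}\ge 0$ and hence the claim. (In the exact-cross-term version this issue disappears, since only $\EE_w\norm{\widehat{\mu}(y)-\mu_0}^2$ is needed.) Also, passing from $\tr(U_I^2)$ to $n\norm{U_I}^2$ uses $\norm{U_I}_F^2\le \mathrm{rank}(U_I)\,\norm{U_I}^2$ together with $\mathrm{rank}(U_I)\le n$, which deserves a word since $U_I$ is $\abs{I}\times\abs{I}$ and $\abs{I}$ may exceed $n$. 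Finally, your mollification plan for legitimizing the second integration by parts (global Lipschitzness plus $\Cc^1$ regularity off the Lebesgue-null set $\Hh$) is the right repair, and is exactly the regularity on which the paper's citation rests.
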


Note that when $X=\Id_n$, the proof of Corollary~\ref{cor-particular} yields that $\norm{ U_I }=1$.

\fi


\section{Proofs} 
\label{sec:proofs}

This section details the proofs of our results.
For a vector $\xx$ whose \bsup is $I$, we introduce the following normalization operator
\begin{equation*}
  \no(\xx_I) = v_I \qwhereq \forall b \in I, v_b = \frac{\xx_b}{\norm{\xx_b}} .
\end{equation*} 

\subsection{Preparatory lemmata}
By standard arguments of convex analysis and using the subdifferential of the group Lasso $\lun-\ldeux$ penalty, the following lemma gives the first-order sufficient and necessary optimality condition of a minimizer of \lasso; see e.g. \cite{bach2008consistency}.
\begin{lem}\label{lem:first-order}
  A vector $\xxs \in \RR^p$ is a solution of \lasso ~if, and only if the following holds
  \begin{enumerate}
    \item On the \bsup $I = \bs(\xxs)$,
    \begin{equation*}
      \XX_I^{\ins{T}}(y - \XX_I \xxs_I) = \lambda \no(\xxs_I) .
    \end{equation*}
    \item For all $b \in \Bb$ such that $b \not\in I$, one has
    \begin{equation*}
      \norm{\XX_b^{\ins{T}}(y - \XX_I \xxs_I)} \leq \lambda .
    \end{equation*}
  \end{enumerate}
\end{lem}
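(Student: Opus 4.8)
The plan is to invoke Fermat's rule for convex optimization. Writing the objective of \lasso as $F(\xx) = \frac{1}{2}\norm{y-\XX\xx}^2 + \lambda\sum_{b\in\Bb}\norm{\xx_b}$, this $F$ is a proper, lower semicontinuous, convex function (a sum of a smooth convex quadratic and the convex group-norm penalty). Hence $\xxs$ is a global minimizer if and only if $0 \in \partial F(\xxs)$, where $\partial F$ denotes the convex subdifferential. Existence of minimizers is already guaranteed by coercivity of the penalty, as noted above, so the whole content of the lemma is to unfold the stationarity inclusion $0\in\partial F(\xxs)$ into the two stated conditions.

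First I would justify the additivity of the subdifferential. The data-fidelity term is differentiable everywhere on $\RR^p$ with gradient $\XX^{\ins{T}}(\XX\xx - y)$, and the penalty is finite and continuous on all of $\RR^p$; therefore the subdifferential sum rule applies with no constraint qualification needed, giving
\[
\partial F(\xx) = \XX^{\ins{T}}(\XX\xx - y) + \lambda\,\partial\Big(\sum_{b\in\Bb}\norm{\xx_b}\Big).
\]
Since the blocks $b\in\Bb$ are disjoint and cover $\ens{1,\ldots,p}$, the penalty is separable across blocks, so its subdifferential is the product of the blockwise subdifferentials of the Euclidean norm $\norm{\cdot}$ on each $\RR^{|b|}$.

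Next, the key computation is the subdifferential of the Euclidean norm. For $\xx_b \neq 0$ the norm is differentiable and $\partial\norm{\xx_b} = \{\xx_b/\norm{\xx_b}\}$, whereas for $\xx_b = 0$ one has $\partial\norm{0} = \enscond{w\in\RR^{|b|}}{\norm{w}\leq 1}$, the closed unit ball, which follows directly from the subgradient inequality together with Cauchy--Schwarz. Combining with the previous paragraph, $0\in\partial F(\xxs)$ is equivalent to the existence of a certificate $v=(v_b)_{b\in\Bb}$ with $\XX^{\ins{T}}(y-\XX\xxs)=\lambda v$, where $v_b = \xxs_b/\norm{\xxs_b}$ for $b\in\bs(\xxs)$ and $\norm{v_b}\leq 1$ for $b\notin\bs(\xxs)$.

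Finally I would split this vector identity blockwise, using $\XX\xxs = \XX_I\xxs_I$ since the coefficients off $I=\bs(\xxs)$ vanish. Restricting to $b\in I$ and stacking yields $\XX_I^{\ins{T}}(y-\XX_I\xxs_I)=\lambda\no(\xxs_I)$, which is condition~1; for $b\notin I$ we get $\XX_b^{\ins{T}}(y-\XX_I\xxs_I)=\lambda v_b$ with $\norm{v_b}\leq 1$, hence $\norm{\XX_b^{\ins{T}}(y-\XX_I\xxs_I)}\leq\lambda$, which is condition~2. For the converse, conditions 1 and 2 produce an admissible certificate $v$, recovering $0\in\partial F(\xxs)$. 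The argument is entirely standard convex analysis, so I do not anticipate a genuine obstacle; the only point requiring a little care---and the one generating the inequality in condition~2 rather than an equality---is the subdifferential of the Euclidean norm at the origin.
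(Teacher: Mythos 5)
Your proposal is correct, and it is precisely the argument the paper has in mind: the paper does not spell out a proof but instead appeals to ``standard arguments of convex analysis and the subdifferential of the group penalty'' (citing Bach's consistency paper), which is exactly the Fermat-rule-plus-blockwise-subdifferential computation you carried out. Your write-up fills in the details (sum rule with the smooth quadratic, product structure over disjoint blocks, unit-ball subdifferential of the Euclidean norm at the origin) faithfully and without any gap.
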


\medskip

We now show that all solutions of \lasso share the same image under the action of $\XX$, which in turn implies that the prediction/response vector $\widehat{\mu}$ is a single-valued mapping of $y$.
\begin{lem}\label{lem:same-image}
  If $\xx^0$ and $\xx^1$ are two solutions of \lasso, then $\XX \xx^0 = \XX \xx^1$.
\end{lem}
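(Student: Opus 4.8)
The plan is to prove that the response $\widehat\mu = \XX\xxs$ is invariant across all minimizers, even though the minimizer itself need not be unique when $\XX$ is rank-deficient. The cleanest route exploits the fact that the group Lasso objective
\[
  F(\xx) = \tfrac12\norm{y - \XX\xx}^2 + \lambda\sum_{b\in\Bb}\norm{\xx_b}
\]
is convex, so its set of minimizers is convex, and the data-fidelity term $\tfrac12\norm{y-\XX\xx}^2$ is a strictly convex function of the vector $\XX\xx$. First I would take two solutions $\xx^0$ and $\xx^1$, set $m^0 = \XX\xx^0$, $m^1 = \XX\xx^1$, and suppose toward a contradiction that $m^0 \neq m^1$. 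By convexity of the minimizer set, $\xx^t = (1-t)\xx^0 + t\xx^1$ is also a minimizer for every $t\in[0,1]$, so $F(\xx^t) = F(\xx^0) = F(\xx^1)$ is constant in $t$.

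Next I would split $F(\xx^t)$ into the data term $Q(\xx^t) = \tfrac12\norm{y - \XX\xx^t}^2$ and the penalty term $J(\xx^t) = \lambda\sum_b\norm{\xx^t_b}$. The penalty $J$ is convex as a sum of norms, and the data term $Q(\xx^t) = \tfrac12\norm{y - ((1-t)m^0 + t\,m^1)}^2$ is, under the assumption $m^0\neq m^1$, \emph{strictly} convex in $t$ because $t\mapsto (1-t)m^0+t\,m^1$ traces a nondegenerate line segment and $v\mapsto\tfrac12\norm{y-v}^2$ is strictly convex. Hence $F(\xx^t) = Q(\xx^t) + J(\xx^t)$ is a sum of a strictly convex function and a convex function of $t$, making it strictly convex on $[0,1]$. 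A strictly convex function cannot be constant on an interval, contradicting $F(\xx^t)\equiv F(\xx^0)$. Therefore $m^0 = m^1$, i.e.\ $\XX\xx^0 = \XX\xx^1$.

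The main obstacle, and the one point requiring care, is justifying that $Q$ is strictly convex in $t$ precisely when $m^0\neq m^1$: the composition $v\mapsto\tfrac12\norm{y-v}^2$ is strictly convex on $\RR^n$, but restricted to the segment $[m^0,m^1]$ it is strictly convex as a function of $t$ only if that segment is a genuine (nondegenerate) segment, which is exactly the hypothesis $m^0\neq m^1$ we are contradicting. This is straightforward once stated carefully, since the second derivative of $Q(\xx^t)$ in $t$ equals $\norm{m^1-m^0}^2 > 0$. An alternative and equally short argument would invoke the first-order optimality condition of Lemma~\ref{lem:first-order}: both solutions satisfy $\XX^{\ins{T}}(y-\XX\xx^i)\in\lambda\,\partial J(\xx^i)$, and subtracting, pairing with $\xx^0-\xx^1$, and using monotonicity of the subdifferential of the convex penalty $J$ forces $\norm{\XX(\xx^0-\xx^1)}^2\le 0$, again giving $\XX\xx^0=\XX\xx^1$. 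I would present the strict-convexity argument as the primary proof since it is self-contained and does not require invoking monotonicity of $\partial J$ explicitly.
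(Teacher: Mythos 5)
Your proposal is correct and takes essentially the same approach as the paper: both arguments hinge on the strict convexity of $u \mapsto \tfrac{1}{2}\norm{y-u}^2$ along the segment between $\XX \xx^0$ and $\XX \xx^1$, combined with convexity of the group penalty, to reach a contradiction with optimality. The paper phrases this as a strict Jensen inequality showing the convex combination strictly beats $\xx^0$, while you phrase it as strict convexity of the objective along the segment contradicting its constancy on the (convex) solution set; these are the same argument in different clothing.
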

\begin{proof}
  Let $\xx^0, \xx^1$ be two solutions of \lasso ~such that $\XX \xx^0 \neq \XX \xx^1$.
  Take any convex combination $\xx^\rho = (1-\rho) \xx^0 + \rho \xx^1$, $\rho \in ]0,1[$.
  Strict convexity of $u \mapsto \norm{y - u}^2$ implies that the Jensen inequality is strict, i.e.
  \begin{equation*}
    \dfrac{1}{2} \norm{y - \XX \xx^\rho}^2 
    <  
    \dfrac{1-\rho}{2} \norm{y - \XX \xx^0}^2 + \dfrac{\rho}{2} \norm{y - \XX \xx^1}^2 ~.
  \end{equation*}
  Denote the $\lun-\ldeux$ norm $\normg{\beta}=\sum_{b \in Bb} \norm{\beta_b}$. Jensen's inequality applied to $\normg{\cdot}$ gives
  \begin{equation*}
    \normg{\xx^\rho} \leq (1-\rho) \normg{\xx^0} + \rho \normg{\xx^1} ~.
  \end{equation*}
  Summing these two inequalities we arrive at $\dfrac{1}{2} \norm{y - \XX \xx^\rho}^2 + \la\normg{\xx^\rho} < \dfrac{1}{2} \norm{y - \XX \xx^0}^2 + \la\normg{\xx^0}$, a contradiction since $\xx^0$ is a minimizer of \lasso.
\end{proof}

\subsection{Proof of Theorem~\ref{thm-local}} 
\label{sub:local}

We first need the following lemma.
\begin{lem}\label{lem-invertible}
  Let $\xx \in \RR^p$ and $\lambda > 0$.
  Assume that $({\rm\bf{A}}(\xx))$ holds for $I$ the \bsup of $\xx$. 
  Then $\XXX+ \lambda \delta_\xx \circ P_\xx$ is invertible.
\end{lem}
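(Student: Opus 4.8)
The plan is to show that the operator $M := \XX_I^{\ins{T}}\XX_I + \lambda\, \delta_\xx \circ P_\xx$ on $\RR^{|I|}$ is symmetric positive definite, and hence invertible. The first step is structural. Both $\delta_\xx$ and $P_\xx$ are block-diagonal with one block per $b \in I$: the operator $P_\xx$ is the orthogonal projector onto $\prod_{b\in I}\xx_b^\bot$, so $P_\xx = P_\xx^{\ins{T}} = P_\xx^2 \succeq 0$, while $\delta_\xx$ acts on block $b$ as the scalar $1/\norm{\xx_b}>0$ times the identity, so $\delta_\xx \succ 0$. Because $\delta_\xx$ is a positive scalar on each block, it commutes with the block-diagonal $P_\xx$; consequently $\delta_\xx \circ P_\xx = \delta_\xx^{1/2}\, P_\xx\, \delta_\xx^{1/2}$ is symmetric positive semidefinite. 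Adding the symmetric PSD term $\XX_I^{\ins{T}}\XX_I$ shows that $M$ is symmetric PSD, so it suffices to prove that it is definite, i.e. $\ker M = \{0\}$.

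To that end I would take $u \in \ker M$. Since $M\succeq 0$, this forces the quadratic form $u^{\ins{T}} M u = \norm{\XX_I u}^2 + \lambda\,\norm{P_\xx \delta_\xx^{1/2} u}^2$ to vanish, where I have used $P_\xx = P_\xx^{\ins{T}}P_\xx$. As both summands are nonnegative, each vanishes separately. The first gives $\XX_I u = 0$. The second gives $P_\xx\delta_\xx^{1/2}u = 0$, whence $\delta_\xx^{1/2}P_\xx u = 0$ by commutation, and then $P_\xx u = 0$ since $\delta_\xx^{1/2}$ is invertible. The condition $P_\xx u = 0$ means that, blockwise, $u_b$ is orthogonal to $\xx_b^\bot$, i.e. $u_b = c_b \xx_b$ for some scalar $c_b$, for every $b\in I$.

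It then remains to combine the two conditions. Substituting $u_b = c_b\xx_b$ into $\XX_I u = 0$ yields $\sum_{b\in I} c_b\, \XX_b\xx_b = 0$. This is exactly where Assumption $({\rm\bf A}(\xx))$ enters: the family $\{\XX_b\xx_b\}_{b\in I}$ is linearly independent, so every $c_b = 0$, hence $u = 0$. Thus $M$ is injective, and being a square operator it is invertible.

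Finally, a word on where the work concentrates. The only genuinely delicate point is the commutation $\delta_\xx P_\xx = P_\xx \delta_\xx$, which turns the a priori nonsymmetric product $\delta_\xx\circ P_\xx$ into a symmetric PSD operator and thereby lets the quadratic form split into two nonnegative pieces, isolating the constraint $u_b \in \mathrm{span}(\xx_b)$. It is worth stressing that the argument uses Assumption $({\rm\bf A}(\xx))$ rather than full column rank of $\XX_I$: once $u$ is constrained to the per-block lines $\{u_b = c_b\xx_b\}_{b\in I}$, only linear independence of the aggregated vectors $\{\XX_b\xx_b\}_{b\in I}$ (one per block) is needed, and not the full column rank of $\XX_I$, which is precisely the weaker hypothesis the lemma assumes.
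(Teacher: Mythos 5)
Your proof is correct and follows essentially the same route as the paper's: both establish that $\XX_I^{\ins{T}}\XX_I + \lambda\,\delta_\xx \circ P_\xx$ is symmetric positive definite by showing each summand is symmetric PSD and that their kernels intersect trivially, with Assumption $({\rm\bf A}(\xx))$ ruling out a nonzero common kernel element. If anything, your version is slightly more careful than the paper's, which loosely writes $\Ker(\delta_\xx \circ P_\xx) = \ens{\xx_I}$ and only tests the vector $\xx_I$ itself, whereas you correctly identify the kernel as all vectors with $u_b = c_b \xx_b$ for arbitrary scalars $c_b$ and use the linear independence of $\ens{\XX_b \xx_b}_{b \in I}$ in full strength.
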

\begin{proof}
We prove that $\XXX+ \lambda \delta_\xx \circ P_\xx$ is actually symmetric definite positive. First observe that $\XXX$ and $\delta_\xx \circ P_\xx$ are both symmetric semidefinite positive. Indeed, $\delta_\xx$ is diagonal (with strictly positive diagonal entries), and $P_\xx$ is symmetric since it is a block-wise orthogonal projector, and we have
\eq{
\dotp{x}{\delta_\xx \circ P_\xx(x)} = \sum_{b \in I} \frac{\norm{\mathrm{Proj}_{\xx_b^\bot}(x)}^2}{\norm{\xx_b}} \geq 0, \quad \forall x \in \RR^{|I|} ~.
}
The inequality becomes an equality if and only if $x=\xx_I$, i.e. $\Ker \delta_\xx \circ P_\xx = \{\xx_I\}$. 

It remains to show that $\Ker \XXX \cap \Ker \delta_\xx \circ P_\xx = \ens{0}$. Suppose that $\xx_I \in \Ker \XXX$. This is equivalent to $\xx_I \in \Ker \XX_I$ since
\eq{
\dotp{\xx_I}{\XXX \xx_I} = \norm{\XX_I\xx_I}^2 ~.
}
But this would mean that
\eq{
\XX_I \beta_I = \sum_{b \in I} \XX_b \xx_b = 0
}
which is in contradiction with the linear independence assumption $({\rm\bf{A}}(\xx))$. \qed

%
\end{proof}

\medskip

Let $y \not\in \Hh$.
We define $I = \bs(\xsoly(y))$ the \bsup of a solution $\xsoly(y)$ of \lasso.
We define the following mapping
\begin{equation*}
  \Gamma(\xx_I,y) = \XX_I^{\ins{T}}(\XX_I^{} \xx_I^{} - y) + \lambda \no(\xx_I) .
\end{equation*}
Observe that the first statement of Lemma~\ref{lem:first-order} is equivalent to $\Gamma(\xsoly_I(y),y) = 0$.

Any $\xx_I \in \RR^{\abs{I}}$ such that $\Gamma(\xx_I,y) = 0$ is solution of the problem
\begin{equation}\label{eq:restricted}\tag{$\lassoP{y}{\lambda}_I$}
  \umin{\xx_I \in \RR^{\abs{I}}}
  \frac{1}{2} \norm{y - \XX_I \xx_I}^2 
  + \lambda  \sum_{g \in I} \norm{\xx_g} ~.
\end{equation}

Our proof will be split in three steps.
We first prove the first statement by showing that there exists a mapping $\bar y \mapsto \solmB$ and an open neighborhood $\Oo$ of $y$ such that every element $\bar y$ of $\Oo$ satisfies $\Gamma(\solm_I(\bar y), \bar y) = 0$ and $\solm_{I^c}(\bar y) = 0$.
Then, we prove the second assertion that $\solmB$ is a solution of ($\lassoP{\bar y}{\lambda}$) for $\bar y \in \Oo$.
Finally, we obtain ~\eqref{eq-differential} from the implicit function theorem.

\begin{enumerate}
\item The Jacobian of $\Gamma$ with respect to the first variable reads on $\RR^{I,*} \times \RR^n$
\begin{equation*}
  \partial_1 \Gamma(\xx_I,y) = 
  \XXX + \lambda \delta_{\xx_I} \circ P_{\xx_I}.
\end{equation*}
The mapping $\partial_1 \Gamma$ is invertible according to Lemma~\ref{lem-invertible}.
Hence, using the implicit function theorem, there exists a neighborhood $\widetilde \Oo$ of $y$ such that we can define a mapping $\solm_I : \widetilde \Oo \to \RR^{\abs{I}} $ which is $\Cc^1(\widetilde \Oo)$, and satisfies for $\bar y \in \widetilde \Oo$
\begin{equation*}
    \Gamma(\solm_I(\bar y),\bar y)  =  0
    \qandq
    \solm_I(y) = \xsoly_I(y).
\end{equation*}
We then extend $\solm_I$ on $I^c$ as $\solm_{I^c}(\bar y)=0$, which defines a continuous mapping $\solm : \widetilde \Oo \to \RR^p$.

\item From the second minimality condition of Lemma~\ref{lem:first-order}, we have
\begin{equation*}
  \forall b \notin I, \quad \norm{\XX_b^{\ins{T}}(y - \XX_I \xsoly_I(y))} \leq \lambda .
\end{equation*}
We define the two following sets
\begin{gather*}
  J_{\text{sat}} = \enscond{b \not\in I}{\norm{\XX_b^{\ins{T}} (y - \XX_I \xsoly_I(y))} = \lambda} , \\
  J_{\text{nosat}} = \enscond{b \not\in I}{\norm{\XX_b^{\ins{T}} (y - \XX_I \xsoly_I(y))} < \lambda} ,
\end{gather*}
which forms a disjoint union of $I^c = J_{\text{sat}} \cup J_{\text{nosat}}$.

\begin{enumerate}[label=\alph{*}), ref=\alph{*})]
\item By continuity of $\bar y \mapsto \solm_I(\bar y)$ and since $\solm_I(y) = \xsoly_I(y)$, we can find a neighborhood $\Oo$ of $y$ included in $\widetilde \Oo$ such that
\begin{equation*}
    \foralls \bar y \in \Oo,\,
    \foralls b \in J_{\text{nosat}}, \quad
    \norm{ \XX_b^{\ins{T}}( \bar y - \XX_I \solm_I(\bar y) ) } \leq \lambda .
\end{equation*}

\item Consider now a block $b \in J_{\text{sat}}$.
Observe that the vector $(y, \xsoly_I(y), \no(\xsoly_I(y)))$ is an element of $\Aa_{I,b}$.
In particular $y \in \pi(\Aa_{I,b})$.
Since by assumption $y \not\in \Hh$, one has $y \not\in \bd(\pi(\Aa_{I,b}))$.
Hence, there exists an open ball $\mathbb{B}(y, \epsilon)$ for some $\epsilon > 0$ such that $\mathbb{B}(y, \epsilon) \subset \pi(\Aa_{I,b})$.
Notice that every element of $\bar y \in \mathbb{B}(y, \epsilon)$ is such that there exists $(\bar\xx_I, \bar v_I) \in \RR^{I,*} \times \RR^{I,*}$ with
\begin{align*}
   \norm{\XX_b^{\ins{T}} (\bar y - \XX_I \bar \xx_I) } &= \lambda \\
   \XX_I^{\ins{T}} (\XX_I \bar \xx_I - \bar y) + \lambda \bar v_I &= 0 \\
   \bar v_I &= \no(\bar\xx_I) ~.
\end{align*}
Using a similar argument as in the proof of Lemma~\ref{lem:same-image}, it is easy to see that all solutions of~\eqref{eq:restricted} share the same image under $\XX_I$. Thus the vector $(\bar y, \solm_I(\bar y), \no(\solm_I(\bar y)))$ is an element $\Aa_{I,b}$, and we conclude that 
\begin{equation*}
  \forall \bar y \in \mathbb{B}(y, \epsilon), \quad
  f_{I,b}(\bar y) = \norm{\XX_b^{\ins{T}} (\bar y - \XX_I\solm_I(\bar y)) } = \lambda .
\end{equation*}
Hence, $f_{I,b}$ is locally constant around $y$ on an open ball $\bar \Oo$.

Moreover, by definition of the mapping $\solm_I$, one has for all $\bar y \in \Oo \cap \bar \Oo$
\begin{equation*}
  \XX_I^{\ins{T}}(y - \XX_I \solm_I(\bar y)) = \lambda \no(\solm_I(\bar y)) \text{ and } \bs(\solm_I(\bar y)) = I .
\end{equation*}
According to Lemma~\ref{lem:first-order}, the vector $\solm(\bar y)$ is a solution of ($\lassoP{\bar y}{\lambda}$).

\end{enumerate}

\item By virtue of statement 1., we are in position to use the implicit function theorem, and we get the Jacobian of $\solm_I$ as
\begin{equation*}
    \partial_{\bar y} \solm_I(\bar y) = -
    \big( \partial_1 \Gamma(\solm_I(y),y) \big)^{-1} 
    \big( \partial_2 \Gamma(\solm_I(y),y) \big)        
\end{equation*}
where $\partial_2 \Gamma(\solm_I(y),y)= \XX_I^{\ins{T}}$, which leads us to~\eqref{eq-differential}.

\end{enumerate}

\subsection{Proof of Theorem \ref{thm-div}} 
\label{sub:dof}

We define the set
\begin{equation}\label{eq:cnd}
   \cnd_{I} =
   \enscond{\xx_I\in \RR^{\abs{I}}}
   {\forall \mu \in \RR^{\sharp {I}}, \sum_{i = 1}^{\sharp I} \mu_i \XX_{b_i}^{} \xx_{b_i} = 0 \Rightarrow \mu = 0} ~,
\end{equation}
where $\sharp {I}$ is the number of blocks in $I$, and $b_i \in I$ is the $i$-th block in $I$.
It is easy to see that $\xxs \in \cnd_I$ for $I$ the \bsup of $\xxs$ if and only if $({\rm\bf{A}}(\xxs))$.

\medskip

The following lemma proves that there exists a solution $\xxs$ of \lasso such that $({\rm\bf{A}}(\xxs))$ holds. A similar result with a different proof can be found in \citep{LiuZhang09}.

\begin{lem}
  There exists a solution $\xxs$ of \lasso such that $\xxs \in \cnd_I$ where $I = \bs(\xxs)$.
\end{lem}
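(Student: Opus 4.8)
The plan is to start from any solution $\xxs$ of \lasso and, if it fails $({\rm\bf{A}}(\xxs))$, to modify it within the solution set until the linear independence condition holds, without changing the objective value. The key observation is that by Lemma~\ref{lem:same-image} all solutions share the same image $\XX\xxs$, hence the same data-fidelity term $\tfrac12\norm{y-\XX\xxs}^2$; therefore the solution set is exactly the set of minimizers of the penalty $\sum_{b}\norm{\xx_b}$ subject to $\XX\xx=\widehat\mu(y)$, and this set is convex and compact (by coercivity of the penalty). The strategy is to select a \emph{vertex}, or more precisely an extreme point, of this solution set and argue that extreme points must satisfy $({\rm\bf{A}}(\xxs))$.

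First I would let $S$ denote the (nonempty, convex, compact) set of solutions and pick an extreme point $\xxs$ of $S$; such a point exists by the Krein--Milman theorem since $S$ is compact and convex. Let $I=\bs(\xxs)$. Suppose for contradiction that $\xxs\notin\cnd_I$, i.e. $({\rm\bf{A}}(\xxs))$ fails: there is $\mu\neq 0$ with $\sum_{i}\mu_i\XX_{b_i}\xxs_{b_i}=0$. The idea is to use this linear dependence to build a nontrivial feasible perturbation direction $h$ supported on $I$ with $\XX h=0$, along which one can move in both directions while staying feasible and \emph{not increasing} the penalty, thereby exhibiting $\xxs$ as a strict convex combination of two distinct solutions and contradicting extremality. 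Concretely I would set $h_{b_i}=\mu_i\,\xxs_{b_i}$ on each block of $I$ (and zero off $I$); then $\XX h=\sum_i\mu_i\XX_{b_i}\xxs_{b_i}=0$, so $\xxs\pm t h$ preserves the image and hence the data term for small $t$.

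The main obstacle, and the step requiring care, is controlling the penalty along this direction. With $h_{b_i}=\mu_i\xxs_{b_i}$ the perturbed blocks are $(\xxs\pm th)_{b_i}=(1\pm t\mu_i)\xxs_{b_i}$, so $\norm{(\xxs\pm th)_{b_i}}=|1\pm t\mu_i|\,\norm{\xxs_{b_i}}$, which is \emph{affine} in $t$ for $|t|$ small enough that every $1\pm t\mu_i$ stays positive. Consequently $g(t)=\sum_{b\in I}\norm{(\xxs+th)_b}$ is affine near $t=0$. An affine function cannot have a strict interior minimum, yet $\xxs$ (at $t=0$) must minimize the penalty over the feasible segment since $\xxs\in S$; hence $g$ is constant in a neighborhood of $0$, so both $\xxs+th$ and $\xxs-th$ lie in $S$ for small $t>0$. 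Since $h\neq 0$ (because $\mu\neq0$ and $\xxs_{b_i}\neq0$ for $b_i\in I$), these are two distinct solutions whose midpoint is $\xxs$, contradicting that $\xxs$ is extreme. I would therefore conclude $\xxs\in\cnd_I$, which establishes $({\rm\bf{A}}(\xxs))$.

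A secondary subtlety worth flagging is the implicit claim that an extreme point of $S$ has its support exactly on the blocks where it is nonzero and that shrinking the support (should some block vanish along the perturbation) only strengthens the conclusion; I would address this by noting that the argument only needs a single direction of nonincrease, and that the roles of the two signs $\pm t$ are symmetric precisely because the penalty restricted to the segment is affine. The proof thus reduces entirely to the affinity of the block norms along the scaling direction $h_{b_i}=\mu_i\xxs_{b_i}$, which is the crux that converts the algebraic dependence from the failure of $({\rm\bf{A}}(\xxs))$ into a genuine face of the convex solution set.
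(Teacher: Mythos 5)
Your proof is correct, and it reaches the conclusion by a genuinely different mechanism than the paper, even though both arguments hinge on exactly the same perturbation: from a dependence $\sum_i \mu_i \XX_{b_i}\xxs_{b_i}=0$ one rescales the blocks, $\xxs_{b_i}\mapsto(1+t\mu_i)\xxs_{b_i}$, which leaves $\XX\xxs$ unchanged and makes the penalty affine in $t$ as long as every factor $1+t\mu_i$ stays positive. The paper pushes $t$ in one direction up to the first value $t_0$ at which some factor $1+t_0\mu_i$ vanishes: it checks via Lemma~\ref{lem:first-order} that $\xx^t$ remains a solution on $[0,t_0)$, passes to the limit by continuity, and obtains a solution whose \bsup is strictly contained in $I$; iterating finitely many times (the support cannot shrink forever) produces the desired solution. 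You instead pick an extreme point of the compact convex solution set (Krein--Milman, or simply Minkowski's theorem in $\RR^p$) and perturb two-sidedly: affinity of the penalty plus optimality of $\xxs$ forces the penalty to be locally constant along $t\mapsto\xxs+th$, so $\xxs$ is the midpoint of two distinct solutions, contradicting extremality. Your route is shorter --- you compare objective values directly and never need to verify the optimality conditions of Lemma~\ref{lem:first-order} along the path --- and it yields the slightly stronger statement that \emph{every} extreme point of the solution set satisfies $({\rm\bf{A}}(\cdot))$. The paper's route is more constructive in flavor: starting from an arbitrary solution, it exhibits an explicit finite chain of solutions with nested, strictly decreasing supports, ending at one satisfying the assumption. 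One remark: the ``secondary subtlety'' you flag about blocks possibly vanishing along the perturbation is moot, since for $|t|$ small all factors $1\pm t\mu_i$ are positive, so the \bsup of $\xxs\pm th$ is exactly $I$ and no case analysis is needed.
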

\begin{proof}
  Let $\xx^0$ be a solution of \lasso and $I = \bs(\xx^0)$ such that $\xx_I^0 \not\in \cnd_I$.
  There exists $\mu \in \RR^{\sharp {I}}$ such that
  \begin{equation}\label{eq:inde}
    \sum_{i = 1}^{\sharp I} \mu_i \XX_{b_i} \xx_{b_i}^0 = 0 .
  \end{equation}
  Consider now the family $t \mapsto \xx^t$ defined for every $t \in \RR$
  \begin{equation}
    \forall b_i \in I, \quad \xx_{b_i}^t = (1 + t \mu_i) \xx_{b_i}^0 \qandq \beta^t_{I^c} = 0 ~.
  \end{equation}
  Consider $t_0 = \min \enscond{\abs{t} \in \RR}{\exists b_i \in I \text{ such that } 1 + t \mu_i = 0} $. Without loss of generality, we assume that $t_0>0$.
  Remark that for all $t \in [0,t_0)$, $\xx^t$ is a solution of \lasso.
  Indeed, $I$ is the \bsup of $\xx^t$ and
  \begin{equation}
    \XX_I \xx_I^t = \XX_I \xx_I^{0} + t \underbrace{\sum_{i=1}^{\sharp I} \mu_i \XX_{b_i} \xx_{b_i}^0}_{=0 \text{ using~\eqref{eq:inde}} } = \XX_I \xx_I^{0} .
  \end{equation}
  Hence, 
  \begin{equation*}
    \XX_I^{\ins{T}}(y - \XX_I \xx_I^t) 
    = \XX_I^{\ins{T}}(y - \XX_I \xx_I^{0})
    = \lambda \no(\xx_I)
    = \lambda \no(\xx_I^t) ,
  \end{equation*}
  and
  \begin{equation*}
    \norm{\XX_b^{\ins{T}}(y - \XX_I \xx_I^t)} = \norm{\XX_b^{\ins{T}}(y - \XX_I \xx^{0}_I)} \leq \lambda , \quad \forall b \in I^c ~.
  \end{equation*}
  Since the image of all solutions of \lasso are equal under $\XX$, one has
  \begin{equation*}
    \XX \xx^{t} = \XX \xx^{0}
    \qandq
    \normg{\xx^{t}} = \normg{\xx^0} .    
  \end{equation*}
  where $\normg{\cdot}$ is the $\lun-\ldeux$ norm. Consider now the vector $\xx^{t_0}$.
  By continuity of $\xx \mapsto \XX \xx$ and $\xx \mapsto \normg{\xx}$, one has
  \begin{equation*}
    \XX \xx^{t_0} = \XX \xx^{0}
    \qandq
    \normg{\xx^{t_0}} = \normg{\xx^0} .
  \end{equation*}
  Hence, $\xx^{t_0}$ has a \bsup $I_{t_0}$ strictly included in $I$ (in the sense that for all $b \in I_{t_0}$ one has $b \in I$) and is a solution of \lasso.
  Iterating this argument with $\xx^0 = \xx^{t_0}$ shows that there exists a solution $\xxs$ such that $\xxs \in \cnd_{\bs(\xxs)}$. This concludes the proof of the lemma. \qed
\end{proof}
  
According to Theorem \ref{thm-local}, $y \mapsto \xsoly(y)$ is $\Cc^1(\RR^n \setminus \Hh)$. This property is preserved under the linear mapping $\XX$ which shows that $\widehat{\mu}$ is also $\Cc^1(\RR^n \setminus \Hh)$. Thus, taking the trace of the Jacobian $X_I d(y,\lambda)$ gives the divergence formula~\eqref{eq:diverg} for any solution $\xsoly(y)$ such that $({\rm\bf{A}}(\xsoly(y)))$ holds.

\subsection{Proof of Theorem \ref{thm-dofsure}} 
\label{sub:sure}

The next lemma shows that the transition space has zero measure.

\begin{lem}
  Let $\lambda > 0$.
  The transition space $\Hh$ is of zero measure with respect to the Lebesgue measure of $\RR^n$.
\end{lem}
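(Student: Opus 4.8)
The plan is to reduce the statement to finitely many pieces and then apply the dimension theory of semialgebraic sets. Since $\Bb$ is a finite partition of $\ens{1,\ldots,p}$, the collection $\Ii$ of all $\Bb$-supports is finite, and for each $I$ there are finitely many blocks $b \notin I$. Hence $\Hh = \bigcup_{I,b} \Hh_{I,b}$ is a \emph{finite} union, and since a finite union of Lebesgue-null sets is null, it suffices to prove that each $\Hh_{I,b} = \bd(\pi(\Aa_{I,b}))$ has Lebesgue measure zero in $\RR^n$.

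First I would verify that each $\Aa_{I,b}$ is a semialgebraic subset of $\RR^n \times \RR^{I,*} \times \RR^{I,*}$. The condition $\XX_I^{\ins{T}}(\XX_I \xx_I - y) + \lambda v_I = 0$ is an affine (hence polynomial) system, and, because $\lambda > 0$, the norm equality $\norm{\XX_b^{\ins{T}}(y - \XX_I \xx_I)} = \lambda$ is equivalent to the polynomial equation $\norm{\XX_b^{\ins{T}}(y - \XX_I \xx_I)}^2 = \lambda^2$. The only constraint that is not polynomial on its face is the normalization $v_g = \xx_g / \norm{\xx_g}$; on $\RR^{|g|} \setminus \{0\}$ this is equivalent to the semialgebraic system $\norm{v_g}^2 = 1$, $\dotp{\xx_g}{v_g} > 0$ and $\norm{\xx_g}^2 = \dotp{\xx_g}{v_g}^2$ (unit norm together with positive collinearity, as one checks by decomposing $\xx_g$ along $v_g$ and its orthogonal complement). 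Intersecting these finitely many polynomial equalities and inequalities with the open semialgebraic domain $\RR^n \times \RR^{I,*} \times \RR^{I,*}$ exhibits $\Aa_{I,b}$ as semialgebraic.

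Next, since the class of semialgebraic sets is stable under linear projection by the Tarski–Seidenberg theorem, $\pi(\Aa_{I,b}) \subset \RR^n$ is semialgebraic. The crux is then a standard fact from the dimension theory of semialgebraic sets (Bochnak, Coste and Roy): the topological boundary of any semialgebraic set $S \subseteq \RR^n$ is semialgebraic of dimension at most $n-1$. Indeed, writing $\bd S = (\bar S \setminus S) \sqcup (\overline{\RR^n \setminus S} \setminus (\RR^n \setminus S))$ and using the frontier-dimension inequality $\dim(\bar A \setminus A) < \dim A$ applied to $A = S$ and to $A = \RR^n \setminus S$, one gets $\dim(\bd S) < n$. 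Since a semialgebraic set of dimension strictly less than $n$ has empty interior and therefore Lebesgue measure zero, applying this to $S = \pi(\Aa_{I,b})$ shows that each $\Hh_{I,b}$ is null, and so is the finite union $\Hh$.

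I expect the main obstacle to be purely the bookkeeping needed to certify semialgebraicity—above all the reformulation of the normalization map $\xx_g \mapsto \xx_g/\norm{\xx_g}$ as a polynomial system and the correct handling of the strict inequalities and of the excision of the origin in $\RR^{I,*}$. Once semialgebraicity is established, the measure-zero conclusion is a black-box application of the boundary-dimension bound and the dimension-versus-measure correspondence, so essentially all of the difficulty is concentrated in setting up the semialgebraic description of $\Aa_{I,b}$ and invoking the frontier-dimension drop.
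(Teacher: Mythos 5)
Your proof is correct and follows essentially the same route as the paper: reduce to a finite union, establish semialgebraicity of $\Aa_{I,b}$, apply Tarski--Seidenberg to the projection, and bound the dimension of the boundary by $n-1$. In fact, you are more careful than the paper on two points. First, the paper asserts that $\Aa_{I,b}$ is \emph{algebraic}, which is not literally true because of the normalization constraint $v_g = \xx_g/\norm{\xx_g}$ and the excision of the origins in $\RR^{I,*}$; your polynomial reformulation (unit norm, positive inner product, equality in Cauchy--Schwarz) is exactly what is needed to certify semialgebraicity. Second, the paper's intermediate inequality $\dim \bd(\pi(\Aa_{I,b})) < \dim \pi(\Aa_{I,b})$ is false as stated in general: when a semialgebraic set has empty interior, its topological boundary is its closure, which has the \emph{same} dimension. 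Your argument, which splits $\bd S$ into the frontier of $S$ and the frontier of its complement and applies the frontier-dimension drop to each piece, sidesteps this and delivers the bound $\dim \bd S \leq n-1$ that the conclusion actually requires, so your write-up is a sound (and slightly repaired) version of the paper's proof.
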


\begin{proof}
We obtain this result by proving that all $\Hh_{I,b}$ are of zero measure for all $I$ and $b \not\in I$, and that the union is over a finite set.

We recall from \citep{coste2002intro} that any semialgebraic set $S \subseteq \RR^n$ can be decomposed in a disjoint union of $q$ semialgebraic subsets $C_i$ each diffeomorphic to $(0,1)^{d_i}$.
The dimension of $S$ is thus
\begin{equation*}
  d = \umax{i \in \{1,\dots,q\}} d_i \leq n.
\end{equation*}
The set $\Aa_{I,b}$ is an algebraic, hence a semialgebraic, set.
By the fundamental Tarski-Seidenberg principle, the canonical projection $\pi(\Aa_{I,b})$ is also semialgebraic.
The boundary $\bd(\pi(\Aa_{I,b}))$ is also semialgebraic with a strictly smaller dimension than $\pi(\Aa_{I,b})$
\begin{equation*}
  \dim \Hh_{I,b}
  =
  \dim \bd(\pi(\Aa_{I,b}))
  <
  \dim \pi(\Aa_{I,b})
  \leq
  n
\end{equation*}
whence we deduce that $\Hh$ is of zero measure with respect to the Lebesgue measure on $\RR^n$. \qed
\end{proof}

As $\widehat{\mu}$ is uniformly Lipschitz over $\RR^n$, using similar arguments as in \citep{MeyerWoodroofe}, we get that $\widehat{\mu}$ is weakly differentiable with an essentially bounded gradient. Moreover, the divergence formula~\eqref{eq:diverg} holds valid almost everywhere, except on the set $\Hh$ which is of Lebesgue measure zero. We conclude by invoking Stein's lemma \citep{stein1981estimation} to establish unbiasedness of the estimator $\widehat \DOF$ of the DOF.

%

Plugging the DOF expression into that of the $\SURE$ \citep[Theorem~1]{stein1981estimation}, we get~\eqref{eq:sure}.

\subsection{Proof of Corollary \ref{cor-particular}} 
\label{sub:proof-cor}
When $\XX=\Id_n$, we have $\XX_I^\ins{T}\XX_I=\Id_I$, which in turn implies that $\Id_I+\lambda \delta_{\xsoly_b(y)} \circ P_{\xsoly_b(y)}$ is block-diagonal. 
Thus, specializing the divergence formula of Theorem~\ref{thm-div} to $\XX=\Id_n$ yields
\begin{align*}
	\widehat{df} & = \tr\bpa{\bpa{ \XXX + \lambda \delta_{\xsoly(y)}  \circ P_{\xsoly(y)}  }^{-1}} \\
	&= \sum_{b \in I} \tr\pa{\pa{\Id_b + \tfrac{\lambda}{\norm{\xsoly_b(y)}}\pa{\Id_b - \tfrac{\xsoly_b(y)\xsoly_b(y)^\ins{T}}{\norm{\xsoly_b(y)}^2}}}^{-1}} \\
&= \sum_{b \in I}\pa{1+\frac{|b|-1}{1+\tfrac{\lambda}{\norm{\xsoly_b(y)}}}} 
\end{align*}
where the last equality follows from the fact that $\mathrm{Proj}_{\xsoly(y)_b^\bot}=\Id_b - \tfrac{\xsoly_b(y)\xsoly_b(y)^\ins{T}}{\norm{\xsoly_b(y)}^2}$ is the orthogonal projector on a subspace of dimension $\abs{b} - 1$.

Furthermore, for $\XX=\Id_n$, $\xsoly_b(y)$ has a closed-form given by block soft thresholding
\begin{equation}\label{eq:block-soft}
  \xsoly_b(y) = 
  \begin{cases}
      0 & \text{if } \norm{y_b} \leq \lambda \\
      (1 - \frac{\lambda}{\norm{y_b}}) y_b & \text{otherwise}
    \end{cases} .
\end{equation}
It then follows that
\begin{align*}
\frac{1}{1+\tfrac{\lambda}{\norm{\xsoly_b(y)}}} = \frac{1}{1+\tfrac{\lambda}{\norm{y_b}-\lambda}} = 1 - \frac{\lambda}{\norm{y_b}} ~.
\end{align*}
Piecing everything together, we obtain
\begin{align*}
\widehat{df} = \sum_{b \in I}\pa{1+(|b|-1)\pa{1 - \frac{\lambda}{\norm{y_b}}}} = \sum_{b \in I} |b| -  \lambda \sum_{b \in I} \frac{|b|-1}{\norm{y_b}} ~.
\end{align*}
As $|I|=\sum_{b \in I} |b|$, we get the desired result. Note that this result can be obtained directly by differentiating~\eqref{eq:block-soft}.

\ifREL
\subsection{Proof of Proposition~\ref{prop:rel}}
Let's introduce the shorhand notation for the reliability 
\eq{
	R = \EE_w\left[\pa{\SURE(\widehat{\mu}(y))-\SE(\widehat{\mu}(y))}^2\right].
} 
Applying \cite[Theorem~4]{vaiter-local-behavior}, we get
\eq{
R = 2n\si^4 - 4\si^4 \tr\pa{{X_IB(y,\lambda)X_I^\ins{T}} \bpa{2\Id_n - X_IB(y,\lambda)X_I^\ins{T}}} + 4\si^2 \EE_w\pa{\norm{\widehat{\mu}(y)-\mu_0}^2}
}
where $B(y,\lambda)=\bpa{ \XXX + \lambda \delta_{\xsoly(y)}  \circ P_{\xsoly(y)}  }^{-1}$ is positive definite by Lemma~\ref{lem-invertible}.

Let's bound the last term. By Jensen's inequality and the fact that $\xsoly(y)$ is a (global) minimizer of \lasso, we have
\begin{align*}
\EE_w\pa{\norm{\widehat{\mu}(y)-\mu_0}^2} 
&\leq 2 (\EE_w\pa{\norm{y - \widehat{\mu}(y)}^2} + \EE_w\pa{\norm{y - \mu_0}^2} \\
&\leq 4 \EE_w\pa{\tfrac{1}{2}\norm{y - \widehat{\mu}(y)}^2+\la \sum_{b \in \Bb} \norm{\xsoly_b(y)}} + 2n\si^2 \\
&\leq 4 \EE_w\pa{\tfrac{1}{2}\norm{y}^2} + 2 n\si^2 =  2 \norm{\mu_0}^2+4n\si^2 ~.
\end{align*}
Let's turn to the second term. We have
\begin{align*}
\tr\pa{X_IB(y,\lambda)X_I^\ins{T}X_IB(y,\lambda)X_I^\ins{T}} &= \tr\pa{X_I^\ins{T}X_IB(y,\lambda)X_I^\ins{T}X_IB(y,\lambda)} \\
&= \norm{X_I^\ins{T}X_IB(y,\lambda)}_F^2 \leq n \norm{X_I^\ins{T}X_IB(y,\lambda)}^2 ~.
\end{align*}
In addition, $X_IB(y,\lambda)X_I^\ins{T}$ is semidefinite positive and therefore
\begin{align*}
R &\leq 2n\si^4 + 4\si^4 \EE_w\pa{\tr\pa{{X_IB(y,\lambda)X_I^\ins{T}} X_IB(y,\lambda)X_I^\ins{T}}} + 4\si^2 \EE_w\pa{\norm{\widehat{\mu}(y)-\mu_0}^2} \\
  &\leq 2n\si^4 + 4n\si^4\EE_w\pa{ \norm{X_I^\ins{T}X_IB(y,\lambda)}^2} + 16n\si^4 + 8\si^2\norm{\mu_0}^2 ~,
\end{align*}
whence we get the desired bound after dividing both sides by $n^2\sigma^4$.

\fi

\if 0
When $\XX = \Id_n$, the solution of \lasso is a block soft thresholding
\begin{equation}\label{eq:block-soft}
  \xsoly_b(y) = 
  \begin{cases}
      0 & \text{if } \norm{y_b} \leq \lambda \\
      (1 - \frac{\lambda}{\norm{y_b}}) y_b & \text{otherwise}
    \end{cases} .
\end{equation}
For every $b \in J$, we differentiate~\eqref{eq:block-soft} to get
\begin{equation*}
  \partial_y \xsoly_b(y):  \alpha \in \RR^{\abs{b}} \mapsto \alpha - \frac{\lambda}{\norm{y_b}} P_{y_b^\bot}(\alpha),
\end{equation*}
which is consistent with expression~\eqref{eq-differential}.
Since $P_{y_b^\bot}(\alpha)$ is a projector on a subspace of dimension $\abs{b} - 1$, one has $\tr (P_{y_b^\bot}) = \abs{b} - 1$.

\fi




\bibliographystyle{spbasic}
\bibliography{l1l2-variations-dof} 

\begin{thebibliography}{25}
\providecommand{\natexlab}[1]{#1}
\providecommand{\url}[1]{{#1}}
\providecommand{\urlprefix}{URL }
\expandafter\ifx\csname urlstyle\endcsname\relax
  \providecommand{\doi}[1]{DOI~\discretionary{}{}{}#1}\else
  \providecommand{\doi}{DOI~\discretionary{}{}{}\begingroup
  \urlstyle{rm}\Url}\fi
\providecommand{\eprint}[2][]{\url{#2}}

\bibitem[{Akaike(1973)}]{akaike1973information}
Akaike H (1973) Information theory and an extension of the maximum likelihood
  principle. In: Second international symposium on information theory, Springer
  Verlag, vol~1, pp 267--281

\bibitem[{Bach(2008)}]{bach2008consistency}
Bach F (2008) Consistency of the group lasso and multiple kernel learning.
  Journal of Machine Learning Research 9:1179--1225

\bibitem[{Bakin(1999)}]{bakin1999adaptive}
Bakin S (1999) Adaptive regression and model selection in data mining problems.
  Thesis (Ph.D.)--Australian National University, 1999

\bibitem[{Bickel et~al(2009)Bickel, Ritov, and Tsybakov}]{BickelLassoDantzig07}
Bickel PJ, Ritov Y, Tsybakov A (2009) Simultaneous analysis of lasso and
  {D}antzig selector. Annals of Statistics 37:1705--1732

\bibitem[{B\"uhlmann and van~de Geer(2011)}]{BuhlmannVandeGeerBook11}
B\"uhlmann P, van~de Geer S (2011) Statistics for High-Dimensional Data:
  Methods, Theory and Applications. Springer

\bibitem[{Cand\`es and Plan(2009)}]{Candes09}
Cand\`es E, Plan Y (2009) Near-ideal model selection by $\ell_1$ minimization.
  Annals of Statistics 37(5A):2145--2177

\bibitem[{Coste(2002)}]{coste2002intro}
Coste M (2002) An introduction to semialgebraic geometry. Tech. rep., Institut
  de Recherche Mathematiques de Rennes

\bibitem[{Donoho(2006)}]{donoho2006most}
Donoho D (2006) For most large underdetermined systems of linear equations the
  minimal $\ell^1$-norm solution is also the sparsest solution. Communications
  on pure and applied mathematics 59(6):797--829

\bibitem[{Dossal et~al(2012)Dossal, Kachour, Fadili, Peyr{\'e}, and
  Chesneau}]{2012-kachour-statsinica}
Dossal C, Kachour M, Fadili J, Peyr{\'e} G, Chesneau C (2012) The degrees of
  freedom of penalized $\ell_1$ minimization. to appear in Statistica Sinica
  \urlprefix\url{http://hal.archives-ouvertes.fr/hal-00638417}

\bibitem[{Efron(1986)}]{efron1986biased}
Efron B (1986) How biased is the apparent error rate of a prediction rule?
  Journal of the American Statistical Association 81(394):461--470

\bibitem[{Kato(2009)}]{kato2009degrees}
Kato K (2009) On the degrees of freedom in shrinkage estimation. Journal of
  Multivariate Analysis 100(7):1338--1352

\bibitem[{Liu and Zhang(2009)}]{LiuZhang09}
Liu H, Zhang J (2009) Estimation consistency of the group lasso and its
  applications. Journal of Machine Learning Research 5:376--383

\bibitem[{Mallows(1973)}]{mallows1973some}
Mallows CL (1973) Some comments on cp. Technometrics 15(4):661--675

\bibitem[{Meyer and Woodroofe(2000)}]{MeyerWoodroofe}
Meyer M, Woodroofe M (2000) On the degrees of freedom in shape-restricted
  regression. Annals of Statistics 28(4):1083--1104

\bibitem[{Osborne et~al(2000)Osborne, Presnell, and Turlach}]{osborne2000new}
Osborne M, Presnell B, Turlach B (2000) A new approach to variable selection in
  least squares problems. IMA journal of numerical analysis 20(3):389

\bibitem[{Solo and Ulfarsson(2010)}]{solo2010threshold}
Solo V, Ulfarsson M (2010) Threshold selection for group sparsity. In:
  Acoustics Speech and Signal Processing (ICASSP), 2010 IEEE International
  Conference on, IEEE, pp 3754--3757

\bibitem[{Stein(1981)}]{stein1981estimation}
Stein C (1981) Estimation of the mean of a multivariate normal distribution.
  The Annals of Statistics 9(6):1135--1151

\bibitem[{Tibshirani(1996)}]{tibshirani1996regre}
Tibshirani R (1996) Regression shrinkage and selection via the {L}asso. Journal
  of the Royal Statistical Society Series B Methodological 58(1):267--288

\bibitem[{Tibshirani and Taylor(2012)}]{tibshirani2012dof}
Tibshirani RJ, Taylor J (2012) Degrees of freedom in {L}asso problems. Tech.
  rep., arXiv:1111.0653

\bibitem[{Tikhonov and Arsenin(1997)}]{Tikhonov97}
Tikhonov AN, Arsenin VY (1997) Solutions of Ill-posed Problems. V. H. Winston
  and Sons

\bibitem[{Vaiter et~al(2012{\natexlab{a}})Vaiter, Deledalle, Peyr\'e, Fadili,
  and Dossal}]{vaiter-icml-workshops}
Vaiter S, Deledalle C, Peyr\'e G, Fadili J, Dossal C (2012{\natexlab{a}})
  Degrees of freedom of the group {L}asso. In: ICML'12 Workshops, pp 89--92

\bibitem[{Vaiter et~al(2012{\natexlab{b}})Vaiter, Deledalle, Peyr{\'e}, Fadili,
  and Dossal}]{vaiter-local-behavior}
Vaiter S, Deledalle C, Peyr{\'e} G, Fadili J, Dossal C (2012{\natexlab{b}})
  Local behavior of sparse analysis regularization: Applications to risk
  estimation. to appear in Applied and Computational Harmonic Analysis
  \urlprefix\url{http://hal.archives-ouvertes.fr/hal-00687751/}

\bibitem[{Wei and Huang(2010)}]{Wei10}
Wei F, Huang J (2010) Consistent group selection in high-dimensional linear
  regression. Bernoulli 16(4):1369--1384

\bibitem[{Yuan and Lin(2006)}]{yuan2006model}
Yuan M, Lin Y (2006) Model selection and estimation in regression with grouped
  variables. J of The Roy Stat Soc B 68(1):49--67

\bibitem[{Zou et~al(2007)Zou, Hastie, and Tibshirani}]{zou2007degrees}
Zou H, Hastie T, Tibshirani R (2007) On the ``degrees of freedom'' of the
  {L}asso. The Annals of Statistics 35(5):2173--2192

\end{thebibliography}

\end{document}